\author{
\alignauthor Cl\'ement Pernet
\\
       \affaddr{Univ. Grenoble Alpes}
       \affaddr{Laboratoire LIP}
       \affaddr{{Inria, Université de Lyon}}\\
       \affaddr{46, All\'ee d'Italie, F69364 Lyon Cedex 07, France}\\ 
       \affaddr{\href{mailto:Clement.Pernet@imag.fr}{Clement.Pernet@imag.fr}}
}
\algrenewcommand\algorithmicrequire{\textbf{Input:}}
\algrenewcommand\algorithmicensure{\textbf{Output:}}
\algrenewcommand\algorithmicreturn{\textbf{Return}}
\algrenewcommand\Return{\State\algorithmicreturn{} }%
\newtheorem{definition}{Definition}
\newtheorem{corollary}{Corollary}
\newtheorem{theorem}{Theorem}
\newtheorem{remark}{Remark}
\newtheorem{lemma}{Lemma}
\newenvironment{smatrix}{\left[\begin{smallmatrix}}{\end{smallmatrix}\right]}
\newcommand{\mat}[1]{\mathsf{#1}\xspace}
\newcommand{\RPM}{\ensuremath{\mathcal{R}}\xspace}
\newcommand{\LTP}{\ensuremath{\text{Left}}\xspace}
\newcommand{\GO}[1]{\ensuremath{O(#1)}\xspace}
\newcommand{\SO}[1]{\ensuremath{O\tilde\ (#1)}\xspace}
\newcommand{\pluq}{\texttt{PLUQ}\xspace}
\newcommand{\trsm}{\texttt{TRSM}\xspace}
\newcommand{\MM}{\texttt{MM}\xspace}
\newcommand{\bruhat}{Bruhat generator\xspace}
\newlength{\myfigsize}
\title{Computing with Quasiseparable Matrices
}
\begin{document}
\CopyrightYear{2016}
\setcopyright{licensedothergov}
\conferenceinfo{ISSAC '16,}{July 19 - 22, 2016, Waterloo, ON, Canada}
\isbn{978-1-4503-4380-0/16/07}\acmPrice{\$15.00 \\ 
Copyright is held by the owner/author(s). Publication rights licensed to ACM.
}
\doi{http://dx.doi.org/10.1145/2930889.2930915}

\maketitle

\abstract{The class of quasiseparable matrices is defined by a pair of bounds, called
  the quasiseparable orders, on the ranks of the sub-matrices
  entirely located in their strictly lower and upper triangular parts. 
These arise naturally in applications, as e.g. the inverse of band matrices, and are
widely used for they admit structured representations allowing to compute with them in time
linear in the dimension.
We show, in this paper, the connection between the notion of quasiseparability and the rank
profile matrix invariant, presented in [Dumas \& al. ISSAC'15]. This allows us to 
propose an algorithm computing the quasiseparable orders $(r_L,r_U)$ in time
\GO{n^2s^{\omega-2}} where $s=\max(r_L,r_U)$ and $\omega$ the
exponent of matrix multiplication. We then present two new structured representations,
a binary tree of PLUQ decompositions, and the \bruhat, using respectively
$\GO{ns\log\frac{n}{s}}$ and $\GO{ns}$ field elements instead of  $\GO{ns^2}$
for the classical generator and $\GO{ns\log n}$ for the hierarchically semiseparable
representations. We present algorithms computing these
representations in time $\GO{n^2s^{\omega-2}}$. These representations allow a matrix-vector
product in time linear in the size of their representation. Lastly we show
how to multiply two such structured matrices in time $\GO{n^2s^{\omega-2}}$.}






\section{Introduction}

The inverse of a tridiagonal matrix, when it exists, is a dense matrix with the
property that all sub-matrices entirely below or above its diagonal have rank at
most one. This
property and many generalizations of it, defining the semiseparable and
quasiseparable matrices, have been extensively studied over the past 80 years.
We refer to~\cite{VBGM05} and~\cite{VVM07} for a broad bibliographic overview on
the topic.
In this paper, we will focus on the class of quasiseparable matrices, introduced
in~\cite{EiGo99}:


\begin{definition}\label{def:quasisep}
  An $n\times n$ matrix $\mat{M}$ is $(r_L,r_U)$-quasi\-se\-pa\-ra\-ble if its
  strictly lower and upper  triangular parts satisfy the following low rank structure:
  for all $1\leq k\leq n-1$,
  \begin{eqnarray}
    \text{rank}(\mat{M}_{k+1..n,1..k})&\leq& r_L, \label{eq:quasisep:low}\\
    \text{rank}(\mat{M}_{1..k,k+1..n})&\leq& r_U \label{eq:quasisep:up}.
  \end{eqnarray}
The values $r_L$ and $r_U$ 
define the quasiseparable orders of $\mat{M}$.
\end{definition}

Quasiseparable matrices can be represented with fewer than $n^2$ coefficients,
using a structured representation, called a generator. The most commonly used
generator~\cite{EiGo99,VBGM05,VVM07,EGO05,BEG16} for a matrix $\mat{M}$, consists of $(n-1)$ pairs of
vectors $p(i),q(i)$ of size $r_L$, $(n-1)$ pairs of vectors $g(i),h(i)$ of size
$r_U$,  $n-1$ matrices $a(i)$ of dimension $r_L\times r_L$, and  $n-1$
matrices $b(i)$ of dimension $r_U\times r_U$ such that 
$$
\mat{M}_{i,j} = \left\{
  \begin{array}{ll}
    p(i)^T\mat{a}^{>}_{ij} q(j), & 1\leq j<i\leq n\\
    d(i),& 1\leq i=j\leq n\\
    g(i)^T\mat{b}^{<}_{ij} h(j), & 1\leq i<j\leq n\\
  \end{array}
\right.
$$
where $\mat{a}^{>}_{ij} = \mat{a}(i-1)\dots \mat{a}(j+1)$ for $j>i+1$, $\mat{a}_{j+1,j}=1$,
and  $\mat{b}^{<}_{ij} = \mat{b}(i+1)\dots \mat{b}(i-1)$ for $i>j+1$, $b_{i,i+1}=1$.
This representation, of size $\GO{n(r_L^2+r_U^2)}$ makes it possible to apply a
  vector in $\GO{n(r_L^2+r_U^2)}$ field operations, multiply two quasiseparable
  matrices in time $\GO{n\max(r_L,r_U)^3}$ and also compute the inverse in time
  $\GO{n\max(r_L,r_U)^3}$~\cite{EiGo99}. 

The contribution of this paper, is to make the connection between the notion of quasiseparability
and a matrix invariant, the rank profile matrix, that we introduced
in~\cite{DPS15}. More precisely, we show that the 
PLUQ decompositions of the lower and 
upper triangular parts of a quasiseparable matrix, using a certain class of pivoting
strategies, also have a structure ensuring that their memory footprint and the
time complexity to compute them does not depend on the rank 
of the matrix but on the quasiseparable order (which can be arbitrarily lower).
Note that we will assume throughout the paper that the PLUQ decomposition
algorithms mentioned have the ability to reveal ranks. This is the case when
computing with exact arithmetic (e.g. finite fields or multiprecision
rationals), but not always with finite precision floating point
arithmetic. In the latter context, a special care need to be taken for the
pivoting of LU decompositions~\cite{HWY92,Pan00}, and QR or SVD decompositions are often
more commonly used~\cite{Chan87,ChIp94}.
This study is motivated by the design of new algorithms on polynomial matrices where
quasiseparable matrices naturally occur, and  more generally by the framework of the
\texttt{LinBox} library~\cite{linbox:2016} for black-box exact linear algebra. 

After defining and recalling the properties of the rank profile matrix in
Section~\ref{sec:prelim}, we propose in Section~\ref{sec:comporders} an
algorithm computing the quasiseparable orders $(r_L,r_U)$ in time 
\GO{n^2s^{\omega-2}} where $s=\max(r_L,r_U)$ and $\omega$ the
exponent of matrix multiplication. We then present in
Section~\ref{sec:generators} two new structured representations, 
a binary tree of PLUQ decompositions, and the \bruhat, using respectively
$\GO{ns\log\frac{n}{s}}$ and $\GO{ns}$ field elements instead of  $\GO{ns^2}$
for the previously known generators. We present in Section~\ref{sec:cost}
algorithms computing them in time 
$\GO{n^2s^{\omega-2}}$. These representations support a matrix-vector
product in time linear in the size of their representation. Lastly we show
how to multiply two such structured matrices in time
 $\GO{n^2s^{\omega-2}}$.

Throughout the paper, $\mat{A}_{i..j,k..l}$ will denote the sub-matrix of
$\mat{A}$ of row indices between $i$ and $j$ and column indices between $k$ and $l$.
The matrix of the canonical basis, with a one at position $(i,j)$ will be
denoted by~$\mat{\Delta}^{(i,j)}$.
 
\section{Preliminaries}\label{sec:prelim}
\subsection{Left triangular matrices}
We will make intensive use of matrices with non-zero elements only located above the main anti-diagonal.
We will refer to these matrices as left triangular, to avoid any confusion with
upper triangular matrices.
\begin{definition}
  A left triangular matrix is any $m\times n$ matrix $\mat{A}$ such
  that $\mat{A}_{i,j}=0$ for all $i> n-j$.
\end{definition}

The left triangular part of a matrix $\mat{A}$, denoted by $\LTP(\mat{A})$ will
refer to the left triangular matrix extracted from it.
We will need the following property on the left triangular part of the product
of a matrix by a triangular matrix.

\begin{lemma}\label{lem:leftproductup}
  Let  $\mat{A}=\mat{B}\mat{U}$ be an $m\times n$ matrix  where 
$\mat{U}$ is $n\times n$ upper triangular. Then
  $\LTP(\mat{A}) = \LTP(\LTP(\mat{B}) \mat{U})$.
\end{lemma}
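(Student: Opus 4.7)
The plan is to prove the identity entry by entry, treating $(i,j)$ either inside or outside the left triangular region, i.e., according to whether $i+j \leq n$ or $i+j > n$. Outside the region there is nothing to check: both $\LTP(\mat{A})$ and $\LTP(\LTP(\mat{B})\mat{U})$ vanish by the definition of $\LTP$, so both sides agree trivially.

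So fix $(i,j)$ with $i \leq n-j$. I would start from $\mat{A}_{i,j} = \sum_{k=1}^{n} \mat{B}_{i,k}\mat{U}_{k,j}$ and immediately exploit the upper-triangularity of $\mat{U}$: since $\mat{U}_{k,j}=0$ for $k>j$, the sum reduces to $\mat{A}_{i,j} = \sum_{k=1}^{j} \mat{B}_{i,k}\mat{U}_{k,j}$. The key observation is then that every index $k$ appearing in this truncated sum satisfies $k \leq j \leq n-i$, which is exactly the condition $i \leq n-k$ placing the entry $(i,k)$ of $\mat{B}$ inside its own left triangular region. Consequently $\mat{B}_{i,k} = \LTP(\mat{B})_{i,k}$ for all $k$ in the range of summation, and one can substitute freely without changing the value.

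Substituting back gives $\mat{A}_{i,j} = \sum_{k=1}^{j} \LTP(\mat{B})_{i,k}\mat{U}_{k,j}$, and re-extending the sum to all $k \in \{1,\dots,n\}$ (which only adds zero terms, again by upper-triangularity of $\mat{U}$) yields $\mat{A}_{i,j} = (\LTP(\mat{B})\mat{U})_{i,j}$. Since $(i,j)$ lies in the left triangular region, this last quantity coincides with $\LTP(\LTP(\mat{B})\mat{U})_{i,j}$, and the two sides are equal.

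There is no real obstacle here: the statement is essentially a bookkeeping lemma, and the only thing requiring care is the interplay of the three inequalities $i \leq n-j$, $k \leq j$ and $k \leq n-i$. The pattern of proof — restrict the sum using upper-triangularity, then rewrite each remaining $\mat{B}_{i,k}$ as $\LTP(\mat{B})_{i,k}$ thanks to the anti-diagonal inequality — is what will carry over to the symmetric variants (lower triangular multiplier, right triangular part, etc.) used later in the paper.
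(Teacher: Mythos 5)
Your proof is correct and follows essentially the same route as the paper's: restrict the sum over $k$ to $k\leq j$ using the upper triangularity of $\mat{U}$, then observe that $k\leq j\leq n-i$ places $(i,k)$ in the left triangular region of $\mat{B}$, so each $\mat{B}_{i,k}$ may be replaced by $\LTP(\mat{B})_{i,k}$. The only difference is that you spell out the trivial case $i+j>n$ and the re-extension of the sum, which the paper leaves implicit.
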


\begin{proof}
  Let $\mat{\bar A} = \LTP(\mat{A}), \mat{\bar B} =\LTP(\mat{B})$.
For $j\leq n-i,$ we have
$
\mat{\bar A}_{i,j} = \sum_{k=1}^n \mat{B}_{i,k} \cdot \mat{U}_{k,j} = \sum_{k=1}^j \mat{B}_{i,k} \cdot \mat{U}_{k,j}
$
as $\mat{U}$ is upper triangular. Now for $k\leq j \leq n-i$, $\mat{B}_{i,k} =
\mat{\bar B}_{i,k}$,
which
proves that the left triangular part of $\mat{A}$ is that of $\LTP(\mat{B})\mat{U}$.
\end{proof}

Applying Lemma~\ref{lem:leftproductup} on $\mat{A}^T$ yields Lemma~\ref{lem:leftproductlow}
\begin{lemma}\label{lem:leftproductlow}
  Let $\mat{A}=\mat{L}\mat{B}$  be an $m\times n$ matrix where
$\mat{L}$ is $m\times m$ lower triangular. Then
  $\LTP(\mat{A}) = \LTP(\mat{L}\LTP(\mat{B}))$.
\end{lemma}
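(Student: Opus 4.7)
The plan is to deduce Lemma~\ref{lem:leftproductlow} from Lemma~\ref{lem:leftproductup} by transposing the factorization. Starting from $\mat{A} = \mat{L}\mat{B}$, I take transposes to get $\mat{A}^T = \mat{B}^T \mat{L}^T$, in which $\mat{L}^T$ is now upper triangular. Lemma~\ref{lem:leftproductup} therefore applies and yields
$\LTP(\mat{A}^T) = \LTP(\LTP(\mat{B}^T)\mat{L}^T)$.

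The second ingredient I would use is that the $\LTP$ operator commutes with transposition: its defining condition $i > n - j$ rewrites as $i + j > n$, which is symmetric in $(i,j)$, so $\LTP(\mat{X}^T) = \LTP(\mat{X})^T$. Combined with the standard rule $(\mat{X}\mat{Y})^T = \mat{Y}^T \mat{X}^T$, this commutation turns the identity above, after transposing both sides, directly into the target equality $\LTP(\mat{A}) = \LTP(\mat{L}\LTP(\mat{B}))$.

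The only step that calls for a bit of care is the commutation of $\LTP$ with transposition in the rectangular case, since transposing swaps the column count $n$ that enters the defining inequality; in the square case, which is what matters for the rest of the paper, this is immediate. A completely parallel direct argument, mirroring the proof of Lemma~\ref{lem:leftproductup} with the triangular factor now on the left, also works and bypasses the transposition subtlety entirely: for $(i,j)$ with $i+j\leq n$, one expands $\mat{A}_{i,j}=\sum_k \mat{L}_{i,k}\mat{B}_{k,j}$, truncates the sum to $k\leq i$ using lower triangularity of $\mat{L}$, and notes that $k\leq i\leq n-j$ forces $\mat{B}_{k,j} = \LTP(\mat{B})_{k,j}$, so the truncated sum is the $(i,j)$ entry of $\mat{L}\LTP(\mat{B})$.
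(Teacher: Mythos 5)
Your proposal follows the same route as the paper, which proves the lemma in a single line by applying Lemma~\ref{lem:leftproductup} to $\mat{A}^T$; your expansion of that line is correct, and your caveat about $\LTP$ not commuting with transposition for rectangular matrices (the anti-diagonal is pinned to the column count $n$, which transposition swaps with $m$) is a genuine subtlety that the paper glosses over but that is harmless in the square setting used throughout. The direct summation argument you sketch as a fallback is also correct and cleanly sidesteps the issue.
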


Lastly, we will extend the notion of quasiseparable order to left triangular
matrices, in the natural way: the left quasiseparable order is the maximal rank of any leading $k\times
(n-k)$ sub-matrix. When no confusion may occur, we will abuse the definition and
simply call it the quasiseparable order.

\subsection{The rank profile matrix}
We will use a matrix invariant, introduced in~\cite[Theorem~1]{DPS15}, that summarizes the
information on the ranks of any leading sub-matrices of a given input matrix.

\begin{definition}{\cite[Theorem~1]{DPS15}}\label{def:rpm}
  The rank profile matrix of an $m\times n$ matrix $\mat{A}$ of rank $r$ is the unique $m \times
  n$ matrix $\mathcal{R}_\mat{A}$, with only $r$ non-zero coefficients, all equal to one, located on
  distinct rows and columns such that any leading sub-matrices of $\mathcal{R}_\mat{A}$ has
  the same rank as the corresponding leading sub-matrix in $\mat{A}$.
\end{definition}

This invariant can be computed in just one Gaussian elimination of the matrix
$\mat{A}$, at the cost of $\GO{mnr^{\omega-2}}$ field operations~\cite{DPS15}, provided some conditions on the pivoting strategy being used. It is obtained
from the corresponding PLUQ decomposition as the product 
$$\RPM_A = \mat{P}\begin{bmatrix}  \mat{I}_r\\&\mat{0}_{(m-r)\times (n-r)}\end{bmatrix}\mat{Q}.$$
%
We also recall in Theorem~\ref{th:PLPT} an important property of such PLUQ
decompositions revealing the rank profile matrix.

\begin{theorem} [{\cite[Th.~24]{DPS15:JSC}, \cite[Th.~1]{DPS13}}] \label{th:PLPT}
Let $\mat{A}=\mat{P}\mat{L}\mat{U}\mat{Q}$ be a PLUQ decomposition revealing the
rank profile matrix of $\mat{A}$.
Then, 
$\mat{P}\begin{bmatrix}  \mat{L}&\mat{0}_{m\times (m-r)}\end{bmatrix}\mat{P}^T$ 
is lower triangular and 
$\mat{Q}^T\begin{bmatrix}  \mat{U}\\\mat{0}_{(n-r)\times n}\end{bmatrix}\mat{Q}$ 
is upper triangular.
\end{theorem}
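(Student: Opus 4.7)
\emph{Plan.} I will focus on the lower triangular claim; the upper triangular one will follow by the same argument applied to $\mat{A}^T$, whose PLUQ decomposition is $\mat{Q}^T\mat{U}^T\mat{L}^T\mat{P}^T$, and then transposing. Write $\sigma$ for the permutation with $\mat{P}_{i,k}=1\iff\sigma(k)=i$ and set $\pi_k:=\sigma(k)$ for $k\le r$; by the characterization in~\cite{DPS15}, a rank profile matrix revealing PLUQ may be taken to select pivots top-to-bottom, so that $\pi_1<\pi_2<\cdots<\pi_r$. A direct entrywise computation then gives
$$(\mat{P}\begin{bmatrix}\mat{L}&\mat{0}\end{bmatrix}\mat{P}^T)_{i,j} \;=\; \begin{bmatrix}\mat{L}&\mat{0}\end{bmatrix}_{\sigma^{-1}(i),\sigma^{-1}(j)}.$$
Entries with $\sigma^{-1}(j)>r$ vanish automatically, so I fix $j=\pi_\ell$ with $\ell\le r$ and $i<j$; the goal becomes to show $\mat{L}_{\sigma^{-1}(i),\ell}=0$.

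If $i$ is itself a pivot row, $i=\pi_k$, then $\pi_k<\pi_\ell$ combined with the top-to-bottom ordering forces $k<\ell$, which places the target entry strictly above the diagonal of the unit lower triangular block of $\mat{L}$ and therefore makes it zero.

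The main obstacle is the case where $i$ is a non-pivot row with $i<\pi_\ell$. Such a row still belongs to the trailing submatrix at step $\ell$, since it has not yet been selected as a pivot. The plan is to exploit the top-to-bottom pivot selection: at step $\ell$ the topmost non-zero row in the trailing submatrix is, by definition of the strategy, row $\pi_\ell$, so every row whose original index is strictly smaller than $\pi_\ell$ must be identically zero throughout that trailing submatrix. Row $i$ therefore contributes no multiplier at step $\ell$, giving $\mat{L}_{\sigma^{-1}(i),\ell}=0$. Combined with the first sub-case, this establishes lower triangularity, after which the transpose argument sketched above delivers the companion upper triangular statement on $\mat{Q}^T\begin{bmatrix}\mat{U}\\\mat{0}\end{bmatrix}\mat{Q}$.
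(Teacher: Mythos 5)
The paper itself does not prove this statement: it is quoted from \cite{DPS15:JSC} and \cite{DPS13}, where it is established by an induction over the admissible pivoting strategies. Judged on its own terms, your attempt has a genuine gap at its very first step: the claim that a rank-profile-matrix-revealing PLUQ decomposition ``may be taken to select pivots top-to-bottom, so that $\pi_1<\pi_2<\cdots<\pi_r$''. This is not a legitimate reduction, because the theorem asserts a property of the \emph{given} factors $\mat{P},\mat{L},\mat{U},\mat{Q}$, so you may not trade them for a more convenient decomposition of the same matrix; and the premise is simply false for the decompositions this paper relies on. The tile recursive algorithm of \cite{DPS13} (the one invoked in Algorithms~\ref{alg:LTElim} and~\ref{alg:bruhatgenerator}, and reassembled by block cyclic shifts in the proof of correctness of LT-Bruhat) orders the columns of $\mat{L}$ by quadrant: a pivot of the top-left block can sit on a lower original row than a pivot of the top-right block that comes later in the column ordering, so the sequence $\pi_1,\dots,\pi_r$ need not be increasing. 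What actually has to be proved is the order-free statement that $\mat{L}_{k,\ell}\neq 0$ with $k>\ell$ forces $\sigma(k)>\sigma(\ell)$, i.e.\ that eliminating pivot $\ell$ only creates fill-in in rows lying strictly below $\pi_\ell$ in the \emph{original} row order even though those rows are scattered by $\mat{P}$; establishing this for every revealing strategy is precisely the content of the cited theorems, and your argument assumes it away in the case where the pivot order and the row order disagree.

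The same over-specialization affects your treatment of non-pivot rows: ``at step $\ell$ the topmost non-zero row in the trailing submatrix is, by definition of the strategy, row $\pi_\ell$'' is the definition of a row-major (lexicographic) pivot search, which is only one of the strategies characterized in \cite{DPS15:JSC} and not the one used here. Your argument is essentially sound for that single iterative strategy, and the transpose reduction you sketch for the $\mat{Q}^T\begin{bmatrix}\mat{U}\\\mat{0}\end{bmatrix}\mat{Q}$ half is fine, but as written the proof does not cover the generality in which the theorem is stated and used. To repair it you would need to argue by induction on the recursive block elimination, tracking for each pivot the set of rows its elimination can touch, rather than appealing to a global top-to-bottom ordering.
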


\begin{lemma}\label{lem:preserverpm} The rank profile matrix invariant is preserved by multiplication
  \begin{enumerate}
  \item to the left  with an invertible lower triangular matrix,
  \item to the right with an invertible upper triangular matrix.
  \end{enumerate}
\end{lemma}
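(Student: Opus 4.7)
The plan is to use the characterization of the rank profile matrix by ranks of leading submatrices (Definition~\ref{def:rpm}) together with its uniqueness. Concretely, I will show that for every $1\leq k\leq m$ and $1\leq \ell\leq n$, the leading $k\times\ell$ submatrix of $\mat{L}\mat{A}$ has the same rank as that of $\mat{A}$ (and symmetrically for $\mat{A}\mat{U}$). Then $\RPM_\mat{A}$ itself satisfies the defining property of $\RPM_{\mat{L}\mat{A}}$, so uniqueness forces $\RPM_{\mat{L}\mat{A}}=\RPM_\mat{A}$.

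For point~(1), I would observe that since $\mat{L}$ is lower triangular, $\mat{L}_{i,t}=0$ for $t>i$, so for every row index $i\leq k$ the sum $(\mat{L}\mat{A})_{i,j}=\sum_t \mat{L}_{i,t}\mat{A}_{t,j}$ only involves $t\leq k$. This gives the block identity
\[
(\mat{L}\mat{A})_{1..k,1..\ell} = \mat{L}_{1..k,1..k}\,\mat{A}_{1..k,1..\ell}.
\]
The diagonal entries of $\mat{L}$ are nonzero (since $\mat{L}$ is invertible lower triangular), hence $\mat{L}_{1..k,1..k}$ is itself invertible, and multiplying by it on the left preserves the rank. Thus every leading submatrix of $\mat{L}\mat{A}$ has the same rank as the corresponding one of $\mat{A}$, which by uniqueness of the rank profile matrix yields $\RPM_{\mat{L}\mat{A}}=\RPM_\mat{A}$.

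For point~(2), I would use the symmetric argument: since $\mat{U}$ is upper triangular, $\mat{U}_{t,j}=0$ for $t>j$, so for $j\leq\ell$ the contributions in $(\mat{A}\mat{U})_{i,j}$ only involve $t\leq \ell$, giving
\[
(\mat{A}\mat{U})_{1..k,1..\ell} = \mat{A}_{1..k,1..\ell}\,\mat{U}_{1..\ell,1..\ell},
\]
and $\mat{U}_{1..\ell,1..\ell}$ is invertible, so the rank is again preserved. Uniqueness of $\RPM$ then yields $\RPM_{\mat{A}\mat{U}}=\RPM_\mat{A}$.

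There is no real obstacle here: the proof is essentially a block identity coming from triangularity, combined with invariance of rank under multiplication by an invertible matrix and the uniqueness clause of Definition~\ref{def:rpm}. The only point that needs a line of care is justifying that the leading diagonal block of an invertible triangular matrix is itself invertible (which is immediate since its diagonal coincides with the corresponding diagonal entries of the ambient matrix, all of which are nonzero).
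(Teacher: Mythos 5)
Your proof is correct and follows essentially the same route as the paper's: the block identity $(\mat{L}\mat{A})_{1..k,1..\ell}=\mat{L}_{1..k,1..k}\mat{A}_{1..k,1..\ell}$, invertibility of the leading triangular block, rank preservation of every leading submatrix, and uniqueness of the rank profile matrix. You merely spell out the symmetric case~(2) and the invertibility of the leading block more explicitly than the paper does.
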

\begin{proof}
  Let $B=LA$ for an invertible lower triangular matrix $L$. Then $\text{rank}(B_{1..i,1..j})
  = \text{rank}(L_{1..i,1..i} A_{1..i,1..j}) = \text{rank}(A_{1..i,1..j})$ for
  any $i\leq m, j\leq n$. Hence $\RPM_B = \RPM_A$.
\end{proof}
\section{Computing the quasiseparable orders}\label{sec:comporders}

Let $\mat{M}$ be an $n\times n$ matrix of which one want to determine the
quasiseparable orders $(r_L,r_U)$. Let $\mat{L}$ and $\mat{U}$ be respectively the
lower triangular part and the upper triangular part of $\mat{M}$. 

Let $\mat{J}_n$ be the unit anti-diagonal matrix. Multiplying on the left by
$\mat{J}_n$ reverts the row order while multiplying on the right by $\mat{J}_n$ reverts
the column order.
Hence both $\mat{J}_n \mat{L}$ and
$\mat{U}\mat{J}_n$ are left triangular matrices. Remark that the
conditions~\eqref{eq:quasisep:low} and~\eqref{eq:quasisep:up} state that all 
leading $k\times (n-k)$ sub-matrices of $\mat{J}_n \mat{L}$ and
$\mat{U}\mat{J}_n$ have rank no greater than $r_L$ and $r_U$ respectively.
We will then use the rank profile matrix of these two left triangular matrices
to find these parameters.

\subsection{From a rank profile matrix}

First, note that the rank profile matrix of a left triangular matrix is not necessarily
left triangular. For example, the rank profile matrix of $
\begin{smatrix}
  1 & 1 & 0\\
  1 & 0 & 0\\
  0 & 0 & 0
\end{smatrix}
$
is
$
\begin{smatrix}
  1 &0&0\\
  0 &1&0\\
  0 &0 &0
\end{smatrix}
$.
However, only the left triangular part of the rank profile matrix is sufficient to
compute the left quasiseparable orders. 

Suppose for the moment that the left-triangular part of the rank profile matrix
of a left triangular matrix is given (returned by a function LT-RPM). It remains to enumerate all leading $k\times (n-k)$
sub-matrices and find the one with the largest number of non-zero elements.
Algorithm~\ref{alg:qsrank} shows how to compute the largest rank of all leading
sub-matrices of such a matrix. Run on $\mat{J}_n\mat{L}$ and $\mat{U}\mat{J}_n$, it returns
successively the quasiseparable orders $r_L$ and $r_U$.
\begin{algorithm}[h]
{\small
  \caption{QS-order} \label{alg:qsrank}
  \begin{algorithmic}
    \Require{$\mat{A}$, an $n\times n$ matrix}
    \Ensure{$\max\{\text{rank}(\mat{A}_{1..k,1..n-k}) : 1\leq k\leq n-1\}$}
    \State $\mat{R} \leftarrow \text{LT-RPM}(\mat{A})$\Comment{The left
      triangular part of the rank profile matrix of $\mat{A}$}
      \State rows $\leftarrow $ (False,\dots,False)
      \State cols $\leftarrow $ (False,\dots,False)
      \ForAll{$(i,j)$ such that $\mat{R}_{i,j}=1$}
        \State rows[i] $\leftarrow$ True
        \State cols[j] $\leftarrow$ True
      \EndFor
      \State $s,r \leftarrow 0$
      \For{$i=1\dots n$}
        \State \textbf{if} rows[$i$] \textbf{then} $r\leftarrow r+1$
        \State \textbf{if} cols[$n-i+1$] \textbf{then} $r\leftarrow r-1$
        \State $s\leftarrow \max(s,r)$
      \EndFor
      \State \textbf{return} $s$
  \end{algorithmic}
}
\end{algorithm}
  
This algorithm runs in $\GO{n}$ provided that the rank profile matrix $\mat{R}$
is stored in a compact way, e.g. using a vector of $r$ pairs of pivot indices ($[(i_1,j_1),\dots,(i_r,j_r)]$.

\subsection{Computing the rank profile matrix of a left triangular matrix}

We now deal with the missing component: computing the left triangular part of
the rank profile matrix of a left triangular matrix.

\subsubsection{From a PLUQ decomposition}

A first approach is to run any Gaussian elimination algorithm that can reveal
the rank profile matrix, as described in~\cite{DPS15}. In
particular, the PLUQ decomposition algorithm of~\cite{DPS13} computes the rank
profile matrix of $\mat{A}$ in $\GO{n^2r^{\omega-2}}$ where
$r=\text{rank}(\mat{A})$. 
However this estimate is pessimistic as it does not take into account the left
triangular shape of the matrix. 
Moreover, this estimate does not depend on the left quasiseparable order $s$
but on the rank $r$, which may be much higher.

\begin{remark}\label{rem:tradeoff}
The discrepancy between the rank $r$ of a left triangular matrix and its
quasiseparable order arises from the location of the pivots in its rank profile
matrix.
Pivots located near the top left corner of the matrix  are shared by
many leading sub-matrices, and are therefore likely contribute to the
quasiseparable order. On the other hand, pivots near the anti-diagonal can
be numerous, but do not add up to a large quasiseparable order.
As an illustration, consider the two following extreme cases:
\begin{enumerate}
\item a matrix $\mat{A}$ with generic rank profile. Then the leading
  $r\times r$ sub-matrix of $\mat{A}$ has rank $r$ and the 
  quasiseparable order is $s=r$.
\item  the matrix with $n-1$ ones right above the anti-diagonal. It has rank
  $r=n-1$ but quasiseparable order $1$.
\end{enumerate}
\end{remark}

Remark~\ref{rem:tradeoff} indicates that in the unlucky cases when $r\gg s$,
the computation should reduce to instances of smaller sizes, hence a trade-off
should exist between, on one hand, the discrepency between $r$ and $s$, and
on the other hand, the dimension $n$ of the problems. All contributions
presented in the remaining of the paper are based on such trade-offs.

\subsubsection{A dedicated algorithm}

In order to reach a complexity depending on $s$ and not $r$, we adapt in
Algorithm~\ref{alg:LTElim}  the tile recursive 
algorithm of~\cite{DPS13}, so that the left triangular structure of the input
matrix is preserved and can be used to reduce the amount of computation.

Algorithm~\ref{alg:LTElim} does not assume that the input matrix is
left triangular, as it will be called recursively with arbitrary matrices, but
guarantees to return the left triangular part of the rank profile matrix. 
\begin{algorithm}[h]
{\small
  \begin{algorithmic}[1]
    \caption{LT-RPM: Left Triangular part of the Rank Profile Matrix} \label{alg:LTElim}
    \Require{$\mat{A}$: an $n\times n$ matrix}
    \Ensure{$\mathcal{R}$: the left triangular part of the RPM of $\mat{A}$}
    \State \textbf{if} $n=1$ \textbf{then} \textbf{return} $[0]$
    \State {Split $\mat{A} = \begin{bmatrix} \mat{A_{1}} & \mat{A_{2}}\\\mat{A_{3}} \end{bmatrix}$ 
      where $\mat{A_{3}}$ is $\lfloor \frac{n}{2} \rfloor \times \lfloor
      \frac{n}{2}\rfloor$}
    \State Decompose $\mat{A_{1}} =
    \mat{P_1} \begin{bmatrix}\mat{L_1}\\\mat{M_1}\end{bmatrix}\begin{bmatrix}\mat{U_1}&\mat{V_1}\end{bmatrix} \mat{Q_1}$ 
    \State $\mathcal{R}_1  \leftarrow \mat{P_1} \begin{bmatrix}
      \mat{I_{r_1}}\\&\mat{0}\end{bmatrix}\mat{Q_1}$ where $r_1=\text{rank}(\mat{A_1})$.
    \State $\begin{bmatrix} \mat{B_1}\\ \mat{B_2}\end{bmatrix}
    \leftarrow \mat{P_1}^T\mat{A_{2}}$ 
    \State $
    \begin{bmatrix}
      \mat{C_1}&\mat{C_2}
    \end{bmatrix}
    \leftarrow \mat{A_{3}}\mat{Q_1}^T$ 
    \State Here $A =
    \left[\begin{array}{cc|c}
        \mat{L_1} \backslash \mat{U_1}& \mat{V_1}& \mat{B_1}\\
        \mat{M_1}               & \mat{0}  & \mat{B_2}\\
        \hline
        \mat{C_1}               & \mat{C_2}& \\
      \end{array}\right]$.
    \State $\mat{D}\leftarrow \mat{L_1}^{-1}\mat{B_1}$ 
    \State $\mat{E}\leftarrow \mat{C_1}\mat{U_1}^{-1}$ 
    \State $\mat{F}\leftarrow \mat{B_2}-\mat{M_1}\mat{D}$ 
    \State $\mat{G}\leftarrow \mat{C_2}-\mat{E}\mat{V_1}$
    \State Here $\mat{A}=
    \left[\begin{array}{cc|c}
        \mat{L_1} \backslash \mat{U_1}& \mat{V_1}& \mat{D}\\
        \mat{M_1}               & \mat{0}  & \mat{F}\\
        \hline
        \mat{E}               & \mat{G}& \\
      \end{array}\right]$.
    
    \State $\mat{H} \leftarrow \mat{P_1}  \begin{bmatrix} \mat{0}_{r_1 \times \frac{n}{2}} \\ \mat{F} \end{bmatrix}$
    \State $\mat{I} \leftarrow  \begin{bmatrix} \mat{0}_{r_1 \times \frac{n}{2}} & \mat{G} \end{bmatrix} \mat{Q_1}$
    \State $\mathcal{R}_2 \leftarrow \texttt{LT-RPM}(\mat{H})$
    \State $\mathcal{R}_3 \leftarrow \texttt{LT-RPM}(\mat{I})$
    \State \textbf{return} $\mathcal{R} \leftarrow  \begin{bmatrix}     \mathcal{R}_1 & \mathcal{R}_2\\ \mathcal{R}_3   \end{bmatrix}$
  \end{algorithmic}
}
\end{algorithm}
While the top left quadrant $\mat{A}_1$ is eliminated using any PLUQ decomposition algorithm
revealing the rank profile matrix, the top right and bottom left quadrants
are handled recursively.

\begin{theorem}\label{th:LTRPM}
  Given an $n\times n$ input matrix $\mat{A}$ with left quasiseparable order $s$,
  Algorithm~\ref{alg:LTElim} computes the left triangular part of the rank 
  profile matrix of $\mat{A}$ in $\GO{n^2s^{\omega-2}}$.
\end{theorem}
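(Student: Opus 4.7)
My plan is to prove correctness and then complexity, both by induction on $n$ following the tile-recursive structure of the algorithm.

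For \emph{correctness}, the base case $n=1$ is immediate since the left triangular region is empty. The inductive step mirrors the analysis of the tile-recursive PLUQ of~\cite{DPS13}: the PLUQ factorisation of $\mat{A_{1}}$ locates all pivots of $\RPM_\mat{A}$ inside the top-left quadrant, giving $\mathcal{R}_1$; the Schur complement $\mat{F}$ (resp.\ $\mat{G}$) encodes the rank-profile information of the top-right (resp.\ bottom-left) region, while Lemma~\ref{lem:preserverpm} guarantees that the invertible triangular row and column operations producing these Schur complements preserve the rank profile matrix. Padding $\mat{F}$ with $r_1$ zero rows and re-applying $\mat{P_1}$ to form $\mat{H}$ (symmetrically for $\mat{G}$, $\mat{Q_1}$, and $\mat{I}$) restores the original row and column indexing, so the recursive calls LT-RPM($\mat{H}$) and LT-RPM($\mat{I}$) return, by induction, the pivots of $\LTP(\RPM_\mat{A})$ outside the top-left quadrant; combining with $\mathcal{R}_1$ yields the full output.

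For the \emph{complexity}, first note that $r_1=\text{rank}(\mat{A_{1}})\le s$, since $\mat{A_{1}}$ is a leading $k\times(n-k)$ submatrix of $\mat{A}$ for $k=n/2$. Hence the PLUQ of $\mat{A_{1}}$ costs $\GO{(n/2)^2 r_1^{\omega-2}}$, and the two triangular solves together with the two rank-$r_1$ Schur updates producing $\mat{F}$ and $\mat{G}$ cost $\GO{n^2 r_1^{\omega-2}}$; all of this lies in $\GO{n^2 s^{\omega-2}}$. Provided the recursive inputs also satisfy the same quasiseparable bound $s$, the recurrence $T(n)\le 2T(n/2)+\GO{n^2 s^{\omega-2}}$ solves to $\GO{n^2 s^{\omega-2}}$ by the master theorem.

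The main obstacle is thus to establish that $\mat{H}$ and $\mat{I}$ really do have left quasiseparable order $\le s$. For $\mat{H}$, I would use a \emph{prefix consistency} property of the DPS13 pivoting strategy: the pivot rows of $\mat{A_{1}}$ lying in its first $k$ rows are precisely the pivot rows of the leading block $(\mat{A_{1}})_{1..k,1..n/2}$, and the $\mat{L}$-factor coefficients used to eliminate each non-pivot row among the first $k$ rows involve only those same pivot rows. This identifies the leading $k\times(n/2-k)$ block of $\mat{H}$, up to a row permutation and the insertion of $p=\text{rank}((\mat{A_{1}})_{1..k,1..n/2})$ zero rows, with the Schur complement of those $p$ pivots taken inside the leading block $\mat{A}_{1..k,1..n-k}$; its rank is therefore $\text{rank}(\mat{A}_{1..k,1..n-k})-p\le s-p\le s$. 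The argument for $\mat{I}$ is entirely symmetric, with columns and $\mat{Q_1}$ in place of rows and $\mat{P_1}$. Verifying this prefix consistency property of the pivoting strategy and carrying the identification through cleanly is the technically most delicate step of the proof.
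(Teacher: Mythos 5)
Your overall skeleton (induction on the tile-recursive structure, correctness via preservation of the rank profile matrix under the triangular elimination, complexity via $r_1\leq s$ and a $T(n)\leq 2T(n/2)+\GO{n^2s^{\omega-2}}$ recurrence) is the same as the paper's, and the cost accounting is right. The one place where you diverge is precisely the step you flag as ``technically most delicate'': bounding the left quasiseparable orders of $\mat{H}$ and $\mat{I}$ by $s$. Your proposed route --- a ``prefix consistency'' property of the pivoting identifying each leading block of $\mat{H}$ with a Schur complement inside $\mat{A}_{1..k,1..n-k}$ --- is plausible (it is essentially a row-by-row unpacking of Theorem~\ref{th:PLPT}), but you leave it unverified, and it is considerably heavier than necessary. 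The paper's argument is a two-liner using tools you already invoke elsewhere: writing $\mat{L}=\mat{P_1}\begin{smatrix}\mat{L_1}^{-1}\\-\mat{M_1}\mat{L_1}^{-1}&\mat{I}\end{smatrix}\mat{P_1}^T$, Theorem~\ref{th:PLPT} makes $\mat{L}$ an invertible lower triangular matrix with $\mat{L}\mat{A_2}=\mat{P_1}\begin{smatrix}\mat{D}\\\mat{F}\end{smatrix}$; by Lemma~\ref{lem:preserverpm} the rank profile matrix, hence the left quasiseparable order, of $\mat{L}\mat{A_2}$ equals that of $\mat{A_2}$, which is at most $s$ because every leading $k\times(n/2-k)$ block of $\mat{A_2}$ sits inside the leading $k\times(n-k)$ block of $\mat{A}$; finally $\mat{H}$ is $\mat{L}\mat{A_2}$ with the $r_1$ pivot rows zeroed out, and zeroing rows cannot increase the rank of any leading submatrix. (Symmetrically for $\mat{I}$ with upper triangular column operations.) So your proof is not wrong, but it stops short at its only hard point, and that point dissolves once you apply Lemma~\ref{lem:preserverpm} to the whole block $\mat{A_2}$ rather than reasoning pivot by pivot. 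I would also encourage you to make the correctness half less telegraphic: the claim that the recursive calls return ``the pivots of $\LTP(\RPM_{\mat{A}})$ outside the top-left quadrant'' needs the explicit factorization of $\begin{bmatrix}\mat{A_1}&\mat{A_2}\end{bmatrix}$ as a lower triangular matrix times a block matrix whose rank profile matrix visibly splits into $\begin{bmatrix}\mathcal{R}_1&\mathcal{R}_2\end{bmatrix}$, which again rests on Theorem~\ref{th:PLPT} and Lemma~\ref{lem:preserverpm}.
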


\begin{proof}
First remark that 
$$
\mat{P_1} \begin{bmatrix}  \mat{D}\\\mat{F}\end{bmatrix} = \underbrace{\mat{P_1}
\begin{bmatrix} \mat{L_1}^{-1}\\ -\mat{M_1}\mat{L_1}^{-1} & \mat{I_{n-r_1}} \end{bmatrix} \mat{P_1}^T}_{\mat{L}} \mat{P_1}
\begin{bmatrix}   \mat{B_1}\\\mat{B_2} \end{bmatrix} = \mat{L}\mat{A}_2.
$$
Hence
$$
\mat{L} \begin{bmatrix}  \mat{A_1} & \mat{A_2}\end{bmatrix} =
\mat{P_1}
\left[\begin{array}{c|c}
  \begin{bmatrix}\mat{U_1} & \mat{V_1}\end{bmatrix}\mat{Q_1} & \mat{D} \\
   \mat{0}   & \mat{F}
\end{array}
\right].
$$
From Theorem~\ref{th:PLPT}, the matrix $\mat{L}$ is lower triangular and by Lemma~\ref{lem:preserverpm}
the rank profile matrix of $\begin{bmatrix}  \mat{A_1} & \mat{A_2}\end{bmatrix}$ equals  that of $\mat{P_1}
\left[\begin{array}{c|c}
  \begin{bmatrix}\mat{U_1} & \mat{V_1}\end{bmatrix}\mat{Q_1} & \mat{D} \\
   \mat{0}   & \mat{F}
\end{array}
\right]$.
Now as $\mat{U_1}$ is upper triangular and non-singular, this rank profile matrix is
in turn that of
$\mat{P_1}
\left[\begin{array}{c|c}
  \begin{bmatrix}\mat{U_1} & \mat{V_1}\end{bmatrix}\mat{Q_1} & \mat{0} \\
   \mat{0}   & \mat{F}
\end{array}
\right]$ and its left triangular part is $\begin{bmatrix}  \mathcal{R}_1 & \mathcal{R}_2\end{bmatrix}$.

By a similar reasoning, $\begin{bmatrix}  \mathcal{R}_1& \mathcal{R}_3\end{bmatrix}^T$
is the left triangular part of the rank profile matrix of $\begin{bmatrix}  \mat{A_1}& \mat{A_3}\end{bmatrix}^T$, which shows that the algorithm is correct.

Let $s_1$ be the left quasiseparable order of $\mat{H}$ and $s_2$ that of $\mat{I}$.
The number of field operations to run Algorithm~\ref{alg:LTElim} is
$$
T(n,s)=\alpha
r_1^{\omega-2}n^2+T_{\text{LT-RPM}}(n/2,s_1)+T_{\text{LT-RPM}}(n/2,s_2)
$$
 for a
positive constant $\alpha$.
We will prove by induction that $T(n,s)\leq 2\alpha s^{\omega-2}n^2$.

Again, since $\mat{L}$ is lower triangular, the rank profile matrix of
$\mat{L}\mat{A_2}$ is that of $\mat{A}_2$ and the quasiseparable orders of
the two matrices are the same. Now $\mat{H}$ is the matrix $\mat{L}\mat{A_2}$
with some rows zeroed out, hence $s_1$, the
quasiseparable order of $\mat{H}$ is no greater than that of $\mat{A_2}$ which 
is less or equal to $s$.
Hence $\max(r_1,s_1,s_2) \leq s $ and we obtain $T(n,s) \leq  \alpha
s^{\omega-2}n^2 + 4 \alpha s^{\omega-2}(n/2)^2 = 2\alpha s^{\omega-2}n^2$.
\end{proof}


\section{More compact generators} \label{sec:generators}

Taking advantage of their low rank property, quasiseparable matrices can be
represented by a structured representation allowing to compute efficiently with them, as for example in
the context of QR or QZ elimination~\cite{EGO05,BEG16}.

The most commonly used generator, as described in~\cite{EiGo99,BEG16} and in the introduction, 
represents an $(r_L,r_U)$-quasiseparable matrix of order $n$  by
$\GO{n(r_L^2+r_U^2)}$ field coefficients\footnote{Note that the statement of 
$\GO{n(r_L+r_U)}$ for the same generator in
~\cite{EGO05} is
erroneous.
}. 

Alternatively, hierarchically semiseparable representations (HSS)
\cite{XCSGL10,KHC16} use numerical rank revealing factorizations of the
off-diagonal blocks in a divide and conquer approach, reducing the size to
\GO{\max(r_L,r_U)n\log n}\cite{KHC16}.
 
A third approach, based on Givens or unitary weights~\cite{DVB07}, performs
another kind of elimination so as to compact the low rank off-diagonal blocks
of the input matrix.

We propose, in this section, two alternative generators, based on an exact PLUQ
decomposition revealing the rank profile matrix. The first one matches the  best
space complexity of the  HSS representation, and improves the time complexity to
compute it by a reduction to fast matrix multiplication. The second one also improves on the space
complexity of HSS representation by removing the extra $\log n$ factor and
shares some similarities with the unitary weight representations of~\cite{DVB07}.

First, remark that storing a PLUQ decomposition 
of rank $r$ and dimension $n\times n$ uses $2rn-r^2$
coefficients: each of the $\mat{L}$ and $\mat{U}$ factor has dimension 
$n\times r$ or $r\times n$; the negative $r^2$ term comes from the lower and upper
triangular shapes of $\mat{L}$ and $\mat{U}$. 
Here again, the rank $r$ can be larger than the quasiseparable order $s$ thus
storing directly a PLUQ decomposition is too expensive.
But as in Remark~\ref{rem:tradeoff}, the setting where $r\gg s$
is precisely when the pivots are near the anti-diagonal, and therefore the $L$
and $U$ factors have an additional structure, with numerous zeros.
%
The two proposed generators, rely on this  fact.

\subsection{A binary tree of PLUQ decompositions}

Following the divide and conquer scheme of Algorithm~\ref{alg:LTElim}, we
propose a first generator requiring
\begin{equation}\label{eq:mem}
\GO{n(r_L\log \frac{n}{r_L}+r_U\log  \frac{n}{r_U})}
\end{equation}
 coefficients.

For a left triangular matrix $\mat{A}=\begin{bmatrix}  \mat{A_1} &
  \mat{A_2}\\ \mat{A_3}\end{bmatrix}$, the sub-matrix $\mat{A_1}$ is represented
by its PLUQ decomposition $(\mat{P}_1,\mat{L}_1,\mat{U}_1,\mat{Q}_1)$, which
requires $2r_1 \frac{n}{2} \leq sn$ field coefficients for 
$\mat{L}_1$ and $\mat{U}_1$ and $2n$ indices for $P$ and $Q$. This scheme is
then recursively applied for the representation of $\mat{A_2}$ and $\mat{A_3}$. 
These matrices have quasiseparable order at most $s$, therefore the following
recurrence relation  for the size of the representation holds:
$$
\left\{
  \begin{array}{llll}
    S(n,s) &=&  sn + 2S(n/2,s) & \text{for } s< n/2\\
    S(n,s) &=&  \frac{n^2}{2} + 2S(n/2,n/4) & \text{for } s\geq n/2\\
\end{array}\right.$$
For $s\geq n/2$, it solves in $S(n,s)=n^2$. Then for $s<n/2$, 
$S(n,s)=sn+2sn/2+\dots+2^kS(n/2^k,s)$, for $k$ such that $\frac{n}{2^k}\leq
s<\frac{n}{2^{k-1}}$, which is $k=\lceil\log_2\frac{n}{s}\rceil$. Hence
 $S(n,s)=sn\log_2 \frac{n}{s} + sn = \GO{sn\log\frac{n}{s}}$.
The estimate~\eqref{eq:mem} is obtained by applying this generator to the upper
and lower triangular parts of the $(r_L,r_U)$-quasiseparable matrix.

This first generator does not take fully advantage of the rank structure of the matrix:
the representation of each anti-diagonal block is independent from the pivots found in
the block $\mat{A}_1$. The second generator, that will be presented in the next
section adresses this issue, in order to remove the logarithmic factors in the estimate~\eqref{eq:mem}.

\subsection{The \bruhat}

We propose an alternative generator inspired by the generalized Bruhat
decomposition~\cite{MH07,Mal10,DPS15:JSC}. Contrarily to the former one, it is not
depending on a specific recursive cutting of the matrix.

Given a left triangular matrix $\mat{A}$ of quasiseparable order $s$ and a PLUQ decomposition of it,
revealing its rank profile matrix $\mat{E}$, the generator consists in the three matrices 
\begin{eqnarray}
  \label{eq:storage}
 \mathcal{L} &=&\LTP(\mat{P}\begin{bmatrix}\mat{L}&\mat{0}\end{bmatrix}\mat{Q}) \label{eq:storage:L},\\
 \mathcal{E} &=& \LTP(\mat{E}),\\
 \mathcal{U} &=& \LTP(\mat{P}\begin{bmatrix}    \mat{U}\\ \mat{0}  \end{bmatrix} \mat{Q})\label{eq:storage:U}.
\end{eqnarray}

Lemma~\ref{lem:LEUstorage} shows that these three matrices suffice to recover
the initial left triangular matrix.
\begin{lemma}\label{lem:LEUstorage}
$\mat{A}= \LTP( \mathcal{L}\mathcal{E}^T \mathcal{U})$
\end{lemma}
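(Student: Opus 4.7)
The plan is to verify the identity entry-wise. First I would observe that collapsing the middle factor of $\mat{E}^T=\mat{Q}^T\begin{bmatrix}\mat{I}_r\\&\mat{0}\end{bmatrix}\mat{P}^T$ and regrouping the factors of $\mat{A}=\mat{P}\mat{L}\mat{U}\mat{Q}$ yields, by a direct block computation,
\[
\mat{A}=X\mat{E}^T Z,\quad X\mat{E}^T=\mat{P}\begin{bmatrix}\mat{L}&\mat{0}\end{bmatrix}\mat{P}^T=:\tilde{\mat{L}},\quad \mat{E}^T Z=\mat{Q}^T\begin{bmatrix}\mat{U}\\\mat{0}\end{bmatrix}\mat{Q}=:\tilde{\mat{U}},
\]
with $X=\mat{P}\begin{bmatrix}\mat{L}&\mat{0}\end{bmatrix}\mat{Q}$ and $Z=\mat{P}\begin{bmatrix}\mat{U}\\\mat{0}\end{bmatrix}\mat{Q}$. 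By Theorem~\ref{th:PLPT}, $\tilde{\mat{L}}$ is lower triangular and $\tilde{\mat{U}}$ is upper triangular.

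Next, since $\mat{A}$ is left triangular, $\mat{A}=\LTP(\mat{A})=\LTP(X\tilde{\mat{U}})$, and Lemma~\ref{lem:leftproductup} applied to the upper triangular factor $\tilde{\mat{U}}$ gives $\mat{A}=\LTP(\mathcal{L}\tilde{\mat{U}})=\LTP(\mathcal{L}\mat{E}^T Z)$. It therefore suffices to prove the reduction $\LTP(\mathcal{L}\mat{E}^T Z)=\LTP(\mathcal{L}\mathcal{E}^T\mathcal{U})$, i.e., that simultaneously truncating $\mat{E}^T$ and $Z$ to their left triangular parts does not alter the left triangular part of the product.

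For this I would enumerate the pivots of $\mat{E}$ as $(l_m,k_m)$ for $m=1,\ldots,r$ and expand both sides entry-wise. The specializations $X_{i,k_m}=\tilde{\mat{L}}_{i,l_m}$ and $Z_{l_m,j}=\tilde{\mat{U}}_{k_m,j}$ of the identities $X\mat{E}^T=\tilde{\mat{L}}$ and $\mat{E}^T Z=\tilde{\mat{U}}$, combined with the triangularity of $\tilde{\mat{L}}$ and $\tilde{\mat{U}}$, yield the key support inequalities
\[
X_{i,k_m}\ne 0\ \Longrightarrow\ l_m\leq i,\qquad Z_{l_m,j}\ne 0\ \Longrightarrow\ k_m\leq j.
\]
Consequently, for any $(i,j)$ with $i+j\leq n$, a nonzero term $\mathcal{L}_{i,k_m}Z_{l_m,j}$ automatically satisfies $l_m+k_m\leq i+j\leq n$ and $l_m+j\leq i+j\leq n$, so the additional truncation constraints carried by $\mathcal{E}^T$ and $\mathcal{U}$ are vacuous, and the two pivot sums coincide. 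For $(i,j)$ with $i+j>n$, both truncated products vanish by definition of $\LTP$, matching $\mat{A}_{i,j}=0$.

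The hard part will be spotting the coincidences $X_{i,k_m}=\tilde{\mat{L}}_{i,l_m}$ and $Z_{l_m,j}=\tilde{\mat{U}}_{k_m,j}$: once these are recognised, Theorem~\ref{th:PLPT} supplies the two support inequalities immediately, and the condition $i+j\leq n$ mechanically forces all the LTP-constraints hidden inside $\mathcal{L}$, $\mathcal{E}$ and $\mathcal{U}$.
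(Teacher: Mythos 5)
Your proof is correct and follows essentially the same route as the paper: regroup $\mat{A}=X\mat{E}^T Z$, invoke Theorem~\ref{th:PLPT} for the triangularity of $X\mat{E}^T$ and $\mat{E}^T Z$, and use Lemma~\ref{lem:leftproductup} to truncate the left factor to $\mathcal{L}$. The only difference is one of packaging: you dispose of the remaining truncations of $\mat{E}^T$ and $Z$ in a single entry-wise pivot-sum argument, whereas the paper applies Lemma~\ref{lem:leftproductlow} to truncate $Z$ to $\mathcal{U}$ and then argues separately that a pivot of $\mat{E}$ below the anti-diagonal meets an all-zero column of $\mathcal{L}$ --- but the support inequalities you extract (column $k_m$ of $X$ starts at row $l_m$, row $l_m$ of $Z$ starts at column $k_m$) are exactly the facts the paper uses.
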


\begin{proof}
 $   \mat{A} =  \mat{P}\begin{bmatrix}  \mat{L} & \mat{0}_{m\times (n-r)}\end{bmatrix} \mat{Q}  \mat{Q}^T \begin{bmatrix}  \mat{U}\\
  \mat{0}_{(n-r)\times n}\end{bmatrix} \mat{Q}.
$
From Theorem~\ref{th:PLPT}, the matrix $\mat{Q}^T\begin{bmatrix}  \mat{U}\\  \mat{0}\end{bmatrix} \mat{Q}$
is upper triangular and the matrix $\mat{P}
\begin{bmatrix}  \mat{L}&\mat{0}\end{bmatrix}\mat{P}^T$ is lower triangular.
Applying Lemma~\ref{lem:leftproductup} yields
$
\mat{A}=\LTP(\mat{A}) = \LTP(\mathcal{L}\mat{Q}^T\begin{bmatrix}  \mat{U}\\
  \mat{0}\end{bmatrix} \mat{Q} ) = \LTP(\mathcal{L}\mat{E}^T \mat{P}\begin{bmatrix}  \mat{U}\\  \mat{0}\end{bmatrix} \mat{Q} ),
$
where $\mat{E}=\mat{P}\begin{smatrix}  \mat{I}_r\\&0\end{smatrix}\mat{Q}$.
Then, as $\mathcal{L}\mat{E}^T$ is the matrix $\mat{P}
\begin{bmatrix}  \mat{L}&\mat{0}\end{bmatrix}\mat{P}^T$ with some coefficients
zeroed out, it is lower triangular, hence applying again
Lemma~\ref{lem:leftproductlow} yields
\begin{equation}\label{eq:ALEU}
\mat{A}= \LTP(\mathcal{L}\mat{E}^T \mathcal{U}).
\end{equation}
Consider any non-zero coefficient $e_{j,i}$ of $\mat{E}^T$ that is not in its the left
triangular part, i.e. $j>n-i$. Its contribution to the product
$\mathcal{L}\mat{E}^T$, is only of the form $\mathcal{L}_{k,j}e_{j,i}$. However
the leading coefficient in column $j$ of $\mat{P}\begin{bmatrix}
  \mat{L}&0\end{bmatrix}\mat{Q}$ is precisely at position $(i,j)$. Since
$i>n-j$, this means that the $j$-th column of $\mathcal{L}$ is all zero, and
therefore $e_{i,j}$ has no contribution to the product.
Hence we finally have
 $\mat{A} = \LTP(\mathcal{L}\mathcal{E}^T\mathcal{U})$.
\end{proof}


We now analyze the space required by this generator.
\begin{lemma}\label{lem:size}
  Consider an $n\times n$ left triangular rank profile matrix $\mat{R}$ with quasiseparable order $s$. Then a
  left triangular matrix $\mat{L}$ all zero except at the positions of the pivots of
  $\mat{R}$ and below these pivots, does not contain more than $s(n-s)$ non-zero
  coefficients. 
\end{lemma}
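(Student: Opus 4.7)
The plan is to count the potentially non-zero entries of $\mat{L}$ pivot-by-pivot, and then swap the order of summation so that the quasiseparability constraint can be applied term by term. Write the pivots of $\mat{R}$ as $(i_1, j_1), \ldots, (i_r, j_r)$. In column $j_t$ the allowed non-zero positions of $\mat{L}$ are the rows $i_t, i_t+1, \ldots, n - j_t$: the lower bound $i_t$ comes from the assumption that entries strictly above the pivot are zero, and the upper bound $n - j_t$ from $\mat{L}$ itself being left triangular. This contributes exactly $n - i_t - j_t + 1$ entries, so
$$N = \sum_{t=1}^{r} (n - i_t - j_t + 1).$$

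The key step is a Fubini exchange. Writing $n - i_t - j_t + 1 = |\{m : i_t \leq m \leq n - j_t\}|$ and swapping summation yields
$$N = \sum_{m=1}^{n-1} N_m, \qquad N_m := |\{t : i_t \leq m \text{ and } j_t \leq n - m\}|,$$
so that $N_m$ simply counts the pivots of $\mat{R}$ lying in the leading $m \times (n-m)$ sub-matrix. Three bounds then apply to each $N_m$: it is $\leq s$ by definition of the quasiseparable order of $\mat{R}$; it is $\leq m$ since the relevant row indices are distinct and lie in $[1, m]$; and it is $\leq n - m$ since the column indices are distinct and lie in $[1, n-m]$. Hence $N_m \leq \min(s, m, n-m)$.

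To finish I would note that the quasiseparable order automatically satisfies $s \leq \lfloor n/2 \rfloor$, because the rank of a $k \times (n-k)$ block never exceeds $\min(k, n-k)$. Splitting the range of $m$ at $m = s$ and $m = n - s$, a direct computation gives
$$\sum_{m=1}^{n-1} \min(s, m, n-m) = \sum_{m=1}^{s-1} m \;+\; \sum_{m=s}^{n-s} s \;+\; \sum_{m=n-s+1}^{n-1} (n-m) = s(s-1) + s(n - 2s + 1) = s(n-s),$$
which is the claimed bound. The only conceptually delicate point is the Fubini exchange; the three-way bound on $N_m$ and the closing arithmetic are routine, and I do not anticipate any real obstacle beyond keeping indices straight. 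The tightness examples in Remark~\ref{rem:tradeoff} (a generic rank profile giving $s=r$, versus pivots stacked on the anti-diagonal) suggest the bound cannot be improved in either direction.
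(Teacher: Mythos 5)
Your proof is correct and is essentially the paper's argument in a different order: both reduce the count to $\sum_{k=1}^{n-1}\operatorname{rank}(\mat{R}_{1..k,1..n-k})$ and bound each term by $\min(s,k,n-k)$, summing to $s(n-s)$. The paper obtains that sum directly by counting, for each row $k$, the columns that can be non-zero there, whereas you count each pivot's column segment and then apply a Fubini exchange; the two bookkeepings are equivalent.
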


\begin{proof}
  Let $p(k)=\text{rank}(\mat{R}_{1..k,1..n-k})$. The value $p(k)$ indicates the
  number of non zero columns located in the $k\times n-k$ leading sub-matrix of
  $\mat{L}$. Consequently the sum $\sum_{k=1}^{n-1} p(k)$ is an upper bound on the
  number of non-zero coefficients in $\mat{L}$.
  Since $p(k)\leq s$, it is bounded by $sn$. More precisely, there is no more
  than $k$ pivots in the first $k$ columns and the first $k$ rows, hence
  $p(k)\leq k$  and $p(n-k)\leq k$ for $k\leq s$. The bound becomes $s(s+1)
  +(n-2s-1)s = s(n-s)$.
\end{proof}

\begin{corollary}
The generator $\left(\mathcal{L},\mathcal{E},\mathcal{U}\right)$ uses $2s(n-s)$ field
  coefficients and $\GO{n}$ additional indices.
\end{corollary}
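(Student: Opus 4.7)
The plan is to bound the nonzero support of each of the three matrices making up the generator and invoke Lemma~\ref{lem:size} twice. I would first treat $\mathcal{L}$. The factor $\mat{L}$ in the PLUQ decomposition is $n\times r$ lower triangular, so $\begin{bmatrix}\mat{L}&\mat{0}\end{bmatrix}$ has a nonzero column only for each of the $r$ pivot columns, and within such a column the leading nonzero sits on the diagonal of $\mat{L}$. Applying the row permutation $\mat{P}$ and column permutation $\mat{Q}$ relocates these leading nonzeros exactly to the pivot positions of $\mat{E}=\mat{P}\begin{smatrix}\mat{I}_r\\&0\end{smatrix}\mat{Q}$, with the remaining nonzeros of each column lying strictly below the pivot. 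Taking the left triangular part only removes entries, so $\mathcal{L}$ still has nonzeros only at pivot positions of $\mathcal{E}$ and below them; Lemma~\ref{lem:size} applied with $\mat{R}=\mathcal{E}$ (which has left quasiseparable order $s$) therefore bounds the number of field coefficients in $\mathcal{L}$ by $s(n-s)$.

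For $\mathcal{U}$, I would use a symmetric argument through transposition. The condition $i+j\leq n$ defining the left triangular shape is symmetric in $(i,j)$, so $\mathcal{U}^T$ is again left triangular. By the same analysis as above applied to $\mat{U}^T$ in place of $\mat{L}$, the matrix $\mathcal{U}^T$ has nonzeros only at the pivot positions of $\mathcal{E}^T$ and below them. The left quasiseparable order of $\mathcal{E}^T$ equals that of $\mathcal{E}$: indeed, $\operatorname{rank}((\mathcal{E}^T)_{1..k,1..n-k})=\operatorname{rank}(\mathcal{E}_{1..n-k,1..k})$, and letting $k'=n-k$ shows that the supremum over $k$ of one equals the supremum over $k'$ of the other. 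Lemma~\ref{lem:size} then gives a second bound of $s(n-s)$ nonzeros in $\mathcal{U}^T$, hence in $\mathcal{U}$. Summing the two yields the $2s(n-s)$ field coefficients claimed.

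It remains to account for the ``additional indices''. The matrix $\mathcal{E}$ contains only $0/1$ entries, and since it arises from a rank profile matrix its nonzeros lie on distinct rows and columns; it can therefore be stored as the list $(i_k,j_k)_{1\leq k\leq r}$ of its pivot coordinates, using at most $2r\leq 2n$ integers. The permutations $\mat{P}$ and $\mat{Q}$ similarly take $\GO{n}$ indices each, and carry no field data. Combined, these contribute $\GO{n}$ indices but zero additional field coefficients, completing the count.

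The main obstacle I anticipate is the rigor of the transposition step for $\mathcal{U}$: one must check carefully that the pivot/below-pivot nonzero pattern needed by Lemma~\ref{lem:size} is indeed inherited by $\mathcal{U}^T$ from $\mathcal{U}$, and that the quasiseparable-order invariance under transposition holds as stated. The other steps are bookkeeping once the hypotheses of Lemma~\ref{lem:size} are verified.
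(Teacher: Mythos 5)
Your proposal is correct and follows the paper's own route: both bound the support of $\mathcal{L}$ by observing that its column-leading entries sit at the pivot positions of $\mathcal{E}$ and invoking Lemma~\ref{lem:size}, then handle $\mathcal{U}$ by the symmetric (transposed) argument, which the paper compresses into ``the same argument applies.'' Your extra checks --- that left triangularity and the left quasiseparable order are invariant under transposition, and that the pivot list accounts for the $\GO{n}$ indices --- are exactly the details the paper leaves implicit, so there is no gap.
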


\begin{proof}
The leading column elements of  $\mathcal{L}$ are located at
the pivot positions of the left triangular rank profile matrix $\mathcal{E}$.
Lemma~\ref{lem:size} can therefore be applied to show that this matrix occupies
no more than $s(n-s)$ non-zero coefficients.
The same argument applies to the matrix $\mathcal{U}$.
\end{proof}

Figure~\ref{fig:bruhatstorage} illustrates this generator on  a left triangular
matrix of quasiseparable order $5$. 
\begin{figure}[h]
\begin{center}
   \includegraphics[height=\myfigsize]{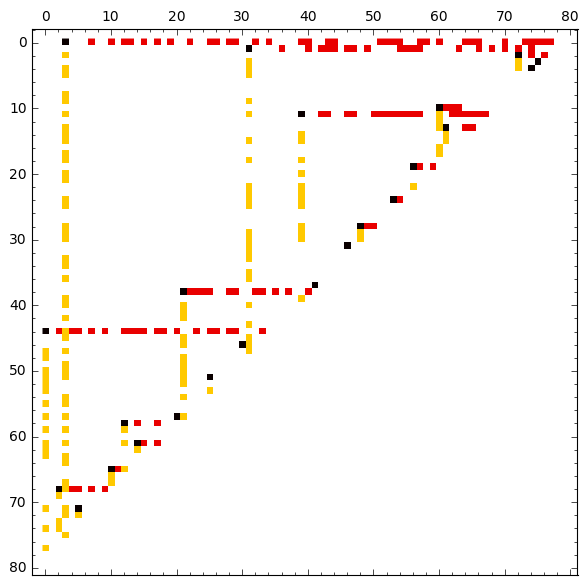}\\
\end{center}
\caption{Support of the $\mathcal{L}$ (yellow), $\mathcal{E}$ (black) and
  $\mathcal{U}$ (red) matrices of the \bruhat for a $80\times 80$ left triangular
  matrix of quasiseparable order $5$.} \label{fig:bruhatstorage}
\end{figure}
As the supports of $\mathcal{L}$ and
$\mathcal{U}$ are disjoint, the two matrices can be shown on the same
left triangular matrix. The pivots of $\mathcal{E}$ (black) are the leading
coefficients of every non-zero row of $\mathcal{U}$ and non-zero column of $\mathcal{L}$.

\begin{corollary}
  Any $(r_L,r_U)$-quasiseparable matrix of dimension $n \times n$ can be
  represented by a generator using no more than $2n(r_L+r_U)+n-2(r_L^2
  -2r_U^2)$ field elements.
\end{corollary}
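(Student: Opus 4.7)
The plan is to apply the \bruhat construction separately to the strict lower and upper triangular parts of $\mat{M}$, using the reflection $\mat{J}_n$ (as done in Section~\ref{sec:comporders}) to turn each of them into a left triangular matrix, and then simply sum the resulting storage costs with the $n$ coefficients needed for the main diagonal.

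More precisely, I would write $\mat{M} = \mat{L}' + \mat{D} + \mat{U}'$ where $\mat{L}'$ is the strictly lower triangular part, $\mat{D}$ is the diagonal, and $\mat{U}'$ is the strictly upper triangular part. The conditions~\eqref{eq:quasisep:low} and~\eqref{eq:quasisep:up} together with the observation made in Section~\ref{sec:comporders} give that $\mat{J}_n \mat{L}'$ is a left triangular matrix whose leading $k \times (n-k)$ sub-matrices all have rank at most $r_L$, hence it has left quasiseparable order at most $r_L$; symmetrically, $\mat{U}' \mat{J}_n$ has left quasiseparable order at most $r_U$. Applying Lemma~\ref{lem:LEUstorage} together with the previous Corollary (bounding the size of the \bruhat by $2s(n-s)$ field coefficients plus \GO{n} indices) to each of these two left triangular matrices yields representations of size $2 r_L (n - r_L)$ and $2 r_U (n - r_U)$ respectively.

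Adding the $n$ coefficients needed to store $\mat{D}$, the total number of field elements required is
\begin{equation*}
2 r_L (n - r_L) + 2 r_U (n - r_U) + n \;=\; 2n(r_L + r_U) + n - 2(r_L^2 + r_U^2),
\end{equation*}
which matches the claimed bound (the overhead of \GO{n} indices for the permutations being absorbed in the usual way). The only step that requires any care is verifying that the reflections $\mat{J}_n$ transfer the quasiseparable order from the triangular part of $\mat{M}$ to the left quasiseparable order of the associated left triangular matrix, but this is precisely the observation already made at the beginning of Section~\ref{sec:comporders}, so no new obstacle arises and the argument is essentially a bookkeeping combination of existing results.
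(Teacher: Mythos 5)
Your proof is correct and follows exactly the route the paper intends: the corollary is stated without proof as an immediate consequence of applying the $2s(n-s)$ bound of the preceding corollary to each of $\mat{J}_n\mat{L}'$ and $\mat{U}'\mat{J}_n$ (of left quasiseparable orders $r_L$ and $r_U$) and adding $n$ coefficients for the diagonal. One remark: your total $2n(r_L+r_U)+n-2(r_L^2+r_U^2)$ does not literally ``match'' the paper's expression $2n(r_L+r_U)+n-2(r_L^2-2r_U^2)$ as you claim --- the paper's formula appears to contain a typo --- but since your bound is the smaller of the two whenever $r_U>0$, the corollary as stated still follows from your argument.
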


\subsection{The compact \bruhat}

The sparse structure of the \bruhat  makes it not amenable to the use
of fast matrix arithmetic. 
We therefore propose here a slight variation of it, that we will use in
section~\ref{sec:cost} for fast complexity estimates. We will first describe this compact
representation for the $\mathcal{L}$ factor of the \bruhat.

First, remark that there exists a permutation matrix $\mathcal{Q}$ moving the
non-zero columns of $\mathcal{L}$ to the first $r$  positions,
sorted by increasing leading row index, i.e. such that $\mathcal{L}\mathcal{Q}$
is in column echelon form.
The matrix $\mathcal{L}\mathcal{Q}$ is now compacted, but still has $r=\text{rank}(A)$
columns, which may exceed $s$ and thus preventing to reach
complexities in terms of $n$ and $s$ only. We will again use the
argument of Lemma~\ref{lem:size} to produce a more compact representation with
only $\GO{ns}$ non-zero elements, stored in dense blocks.
Algorithm~\ref{alg:compactbruhat} shows how to build such a representation composed of
a block diagonal matrix and a block sub-diagonal matrix, where all blocks  have
column dimension $s$:
$
\begin{smatrix}
  \mat{D}_{1}\\
  \mat{S}_{2} & \mat{D}_{2}\\
  &\mat{S}_{3} & \mat{D}_{3}\\
  &&\ddots&\ddots\\
  &&&\mat{S}_{t}&\mat{D}_{t}
\end{smatrix}.
$

\begin{algorithm}[h]
{\small
\caption{Compressing the \bruhat} \label{alg:compactbruhat}
 \begin{algorithmic}[1]
\Require{$\mathcal{L}$: the first matrix of the \bruhat}
\Ensure{$\mat{D}, \mat{S}, \mat{T}, \mathcal{Q}$: the compression of $\mathcal{L}$}

\State $\mathcal{Q}\leftarrow$ a permutation s.t. $\mathcal{L}\mathcal{Q}$
is in column echelon form
\State $\mat{C}\leftarrow \mathcal{L}  \mathcal{Q} \begin{smatrix}  \mat{I}_r\\
  \mat{0}\end{smatrix}$ where $r=\text{rank}(\mathcal{L})$ 
\State Split $\mat{C}$ in column slices of width $s$. \\ \Comment $\mat{C}=
\begin{smatrix}
  \mat{C}_{11}& \\
  \mat{C}_{21} & \mat{C}_{22} \\
  \vdots & \vdots& \ddots\\
  \mat{C}_{t1} & \mat{C}_{t2} & \dots & \mat{C}_{tt}
\end{smatrix}$ where $\mat{C}_{ii}$ is $k_i\times s$.
\State $\mat{D} \leftarrow \text{Diag}(\mat{C}_{11},\dots,\mat{C}_{tt})$
\State \label{step:CminusD} $\mat{C} \leftarrow \mat{C}-\mat{D} = 
\begin{smatrix}\mat{0}\\ \mat{C}_{21}  \\
\vdots &  \ddots&\ddots\\
  \mat{C}_{t1} &  \dots & \mat{C}_{t,t-1}& \mat{0}
\end{smatrix}
$
\State $\mat{T}\leftarrow \mat{I}_n$
\For{$i=3\dots t$} \label{step:loop}
   \For{each non zero column $j$ of $ \begin{smatrix}  \mat{C}_{i,i-2} \\ \dots \\
       \mat{C}_{t,i-2}\end{smatrix} $}
      \State \label{step:zerocol} Let $k$ be a zero column of 
$ \begin{smatrix}  \mat{C}_{i,i-1} \\ \dots \\ \mat{C}_{t,i-1}\end{smatrix}$
      \State \label{step:movecol} Move col. $j$ in $\begin{smatrix}  \mat{C}_{i,i-2}\\ \vdots\\\mat{C}_{t,i-2}   \end{smatrix}$ to 
   col. $k$ in  $\begin{smatrix}  \mat{C}_{i,i-1}\\ \vdots\\\mat{C}_{t,i-1}   \end{smatrix}$.
      \State $\mat{T}\leftarrow (\mat{I}_n + \mat{\Delta}^{(k,j)}-\mat{\Delta}^{(k,k)})\times \mat{T}$
   \EndFor
\EndFor

\State$\mat{S}\leftarrow\mat{C}=\begin{smatrix}
  \mat{0}& \\
  \mat{C}_{21} & \mat{0} \\
       & \ddots& \ddots\\
       &       & \mat{C}_{t,t-1} & \mat{0}
\end{smatrix}$
\Return $(\mat{D},\mat{S}, \mat{T}, \mathcal{Q})$
\end{algorithmic}
}
\end{algorithm}
\begin{figure}[h]
  \centering
   \includegraphics[height=\myfigsize]{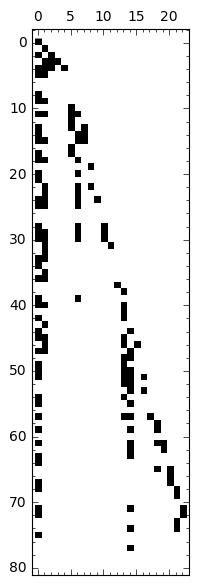}
   \includegraphics[height=\myfigsize]{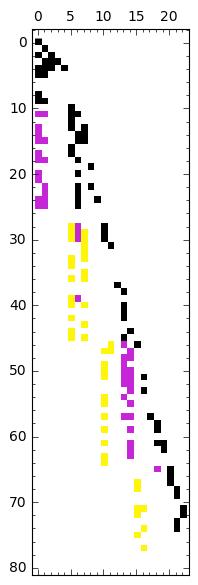}
     \includegraphics[width=\myfigsize]{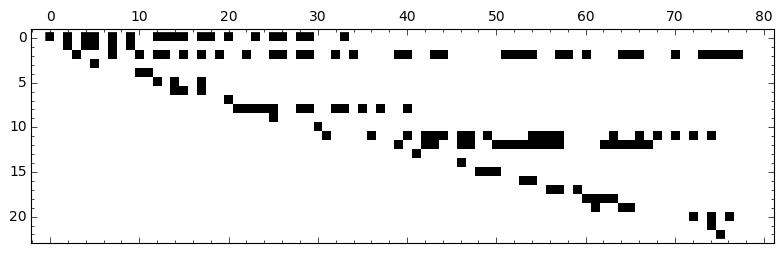}\\
     \includegraphics[width=\myfigsize]{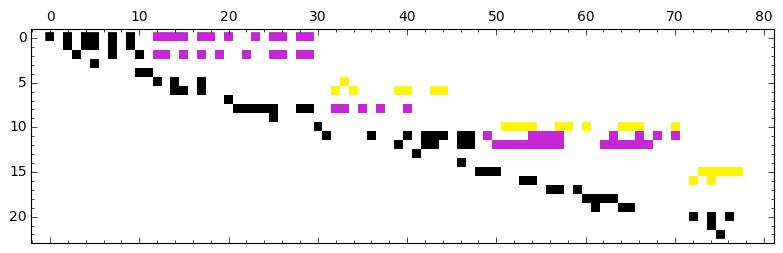}\\
  \caption{Support of the matrices $\mathcal{L}\mathcal{Q}$ (left),
    $\mathcal{P}\mathcal{U}$ (center) and of the corresponding
    compact \bruhat (right and bottom) for the matrix of Figure~\ref{fig:bruhatstorage}. In
  the compact \bruhat: $\mat{D}$ is in black and
  $\mat{S}$ in magenta and yellow; those rows and columns 
  moved at step~\ref{step:movecol} of Algorithm~\ref{alg:compactbruhat} are
   in yellow.}
  \label{fig:compactbruhat}
\end{figure}
%

\begin{lemma}
  Algorithm~\ref{alg:compactbruhat} computes a tuple $(\mat{D},\mat{S},\mat{T},\mathcal{Q})$
where $\mathcal{Q}$ is a permutation matrix putting $\mathcal{L}$ in column
echelon form, $\mat{T} \in \{0,1\}^{r\times r}$,  $\mat{D}=\text{Diag}(\mat{D}_1,\dots,\mat{D}_{t})$, $\mat{S}=
\begin{smatrix}
  \mat{0}\\
  \mat{S}_2 & \mat{0}\\
   & \ddots &\ddots\\
   & &\mat{S}_t
\end{smatrix}
$  where each $\mat{D}_i$ and $\mat{S}_i$ is $k_i\times s$ for $k_i\geq s$ and
$\sum_{i=1}^tk_i=n$. This tuple is the compact \bruhat for $\mathcal{L}$ and satisfies
$
\mathcal{L} =
\begin{bmatrix}
 \mat{D} + \mat{S} \mat{T} & \mat{0}_{n\times (n-r)}
\end{bmatrix}
\mathcal{Q}^T
$.
\end{lemma}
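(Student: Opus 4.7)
The plan is to separately establish: (a) the structural form of $(\mat{D},\mat{S},\mat{T},\mathcal{Q})$; (b) the reconstruction identity $\mathcal{L}=\begin{bmatrix}\mat{D}+\mat{S}\mat{T}&\mat{0}_{n\times(n-r)}\end{bmatrix}\mathcal{Q}^T$; and (c) the well-definedness of the inner loop, namely that at step~9 a zero column~$k$ can always be found. The structural part is largely bookkeeping: $\mathcal{Q}$ is produced at step~1 by definition of column echelon form; the bound $k_i\geq s$ follows from this form, since each block column has $s$ columns with strictly increasing leading rows and hence spans at least~$s$ rows; $\mat{D}$ is block-diagonal by line~4; and $\mat{S}$ is block-sub-diagonal as the exit invariant of the double loop, which zeroes every block $\mat{C}_{i,j}$ with $j<i-1$. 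For the 0/1 nature of $\mat{T}$, I argue by induction: the update $\mat{T}\leftarrow(\mat{I}_n+\mat{\Delta}^{(k,j)}-\mat{\Delta}^{(k,k)})\mat{T}$ replaces row~$k$ by row~$j$ of the running $\mat{T}$; because $k$ is always chosen as a zero column of the target block, it has never appeared as a target (nor as a compatible source) in a prior iteration, so row~$k$ of the running $\mat{T}$ is still $\mat{e}_k^T$ and, replaced by row~$j$ (another basis vector by induction), remains 0/1.

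For the reconstruction identity I maintain the invariant $\mat{C}\mat{T}=\mat{C}^{(0)}$, where $\mat{C}^{(0)}$ is the matrix produced at step~5; this holds initially as $\mat{T}=\mat{I}$. One inner-loop iteration sets $\mat{C}\leftarrow\mat{C}\cdot N$ with $N=\mat{I}-\mat{\Delta}^{(j,j)}+\mat{\Delta}^{(j,k)}$ (this is precisely the move ``column~$j\to k$'') and $\mat{T}\leftarrow M\mat{T}$ with $M=\mat{I}+\mat{\Delta}^{(k,j)}-\mat{\Delta}^{(k,k)}$. A direct expansion gives $NM=M$, and $\mat{C}\cdot M=\mat{C}$ because column~$k$ of the current $\mat{C}$ is entirely zero---zero above block-row $i-1$ by the block lower-triangular shape, zero in block-row $i-1$ because the diagonal block was subtracted at step~5, and zero in block-rows $i..t$ by the very choice at step~9. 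Hence $\mat{C}^{(\text{new})}\mat{T}^{(\text{new})}=\mat{C}\cdot NM\cdot\mat{T}=\mat{C}\mat{T}$, preserving the invariant. At termination $\mat{C}=\mat{S}$, so $\mat{S}\mat{T}=\mat{C}^{(0)}=\mathcal{L}\mathcal{Q}\begin{smatrix}\mat{I}_r\\ \mat{0}\end{smatrix}-\mat{D}$, and right-multiplication of $\mat{D}+\mat{S}\mat{T}$ by $\mathcal{Q}^T$ recovers $\mathcal{L}$.

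The main obstacle is feasibility of step~9. Let $a_i$ denote the number of non-zero columns in the source slice $\begin{smatrix}\mat{C}_{i,i-2}\\ \vdots\\ \mat{C}_{t,i-2}\end{smatrix}$ and $a'_i$ the analogous count in the target slice $\begin{smatrix}\mat{C}_{i,i-1}\\ \vdots\\ \mat{C}_{t,i-1}\end{smatrix}$ at the start of iteration~$i$. Each move preserves $a_i+a'_i$ (decreasing $a_i$ and increasing $a'_i$ by one), so it suffices to show $a_i+a'_i\leq s$; this guarantees a zero column is available at every call to step~9. By the outer-loop invariant only block columns $i-2$ and $i-1$ can carry non-zero content in the bottom block-rows $i..t$, so $a_i+a'_i$ equals the total number of columns of $\mat{C}$ whose support reaches row $R_i:=k_1+\dots+k_{i-1}+1$. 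These correspond, through $\mathcal{Q}$, to columns of $\mathcal{L}$ whose pivot extent crosses row~$R_i$; the quasiseparable order hypothesis, exploited exactly as in the proof of Lemma~\ref{lem:size}, bounds their number by~$s$.
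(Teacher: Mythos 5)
Your proof is correct and follows essentially the same route as the paper's: $k_i\geq s$ from the $s$ pivots in each diagonal block, feasibility of the zero-column choice via the row-counting of pivot segments from Lemma~\ref{lem:size}, and the identity $\mat{S}\mat{T}=\mat{C}^{(0)}$ by showing each update of $\mat{T}$ undoes the corresponding column move (which you formalize slightly more carefully than the paper via the invariant $\mat{C}\mat{T}=\mat{C}^{(0)}$ and the identity $NM=M$). The only nitpick is the phrase ``only block columns $i-2$ and $i-1$ can carry non-zero content in block rows $i..t$,'' which should be restricted to block columns $<i$ (blocks $\mat{C}_{l,m}$ with $i\leq m<l$ are also non-zero there), but this does not affect your count of segments crossing row $R_i$.
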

\begin{proof}
First, note that for every $i$, the dimensions of the blocks $\mat{S}_i$ and $\mat{D}_i$ are that of the
block $\mat{C}_{ii}$. This block contains $s$ pivots, hence $k_i\geq s$.
We then prove that there always exists a zero column to pick at
step~\ref{step:zerocol}.
The loci of the possible non-zero elements in $\mathcal{L}$ are column segments
below a pivot and above the anti-diagonal. From Lemma~\ref{lem:size}, these
segments have the property that each row of $\mathcal{L}$ is intersected by no
more than $s$ of them.
This property is preserved by column permutation, and still holds on the matrix  $\mat{C}$.
In the first row of $\begin{bmatrix} \mat{C}_{i1}&\dots&
  \mat{C}_{ii}\end{bmatrix}$, there is a pivot located in the block
$\mat{C}_{ii}$. Hence there is at most $s-1$ such segments intersecting $\begin{bmatrix} \mat{C}_{i1}&\dots&
  \mat{C}_{i,i-1}\end{bmatrix}$. These $s-1$ columns can all be gathered in the block
$\mat{C}_{i,i-1}$ of column dimension~$s$.

There only remains to show that $\mat{S}\mat{T}$ is the matrix $\mat{C}$ of
step~\ref{step:CminusD}. For every pair of indices $(j,k)$ selected in
loop~\ref{step:loop}, right multiplication by
$(\mat{I}_n+\mat{\Delta}^{(k,j)}-\mat{\Delta}^{(k,k)})$ adds up column $k$ to column
$j$ and zeroes out column $k$. On matrix $\mat{S}$, this has the effect of
reverting each operation done at step~\ref{step:movecol} in the reverse order
of  the loop~\ref{step:loop}.
\end{proof}


A compact representation of $\mathcal{U}$ is obtained in Lemma~\ref{lem:compactU} by running
Algorithm~\ref{alg:compactbruhat} on  $\mathcal{U}^T$ and transposing its
output.

\begin{lemma}\label{lem:compactU}
  There exist a tuple $(\mat{D},\mat{S},\mat{T},\mathcal{P})$ called the compact
  \bruhat for $\mathcal{U}$ such that
 $\mathcal{P}$ is a permutation matrix putting $\mathcal{U}$ in row echelon
form, $\mat{T} \in \{0,1\}^{r\times r}$,  $\mat{D}=\text{Diag}(\mat{D}_1,\dots,\mat{D}_{t})$, $\mat{S}=
\begin{smatrix}
  \mat{0}&  \mat{S}_2 \\
   & \ddots &\ddots\\
   &&\mat{0}&\mat{S}_t
\end{smatrix}
$  where each $\mat{D}_i$ and $\mat{S}_i$ is $s\times k_i$ for $k_i\geq s$ and
$\sum_{i=1}^tk_i=n$ and 
$
\mathcal{U} =
\mathcal{P}^T \begin{bmatrix}
 \mat{D} +  \mat{T}\mat{S} \\ \mat{0}_{(n-r)\times n}
\end{bmatrix}
$.
\end{lemma}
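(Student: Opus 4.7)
The plan is to reduce Lemma~\ref{lem:compactU} to the already established compact \bruhat for $\mathcal{L}$ by exploiting the symmetry between the two factors under transposition. First I would verify that $\mathcal{U}^T$ meets the hypotheses of the argument used for $\mathcal{L}$: since $\mathcal{U}=\LTP(\mat{P}\begin{bmatrix}\mat{U}\\\mat{0}\end{bmatrix}\mat{Q})$, its non-zero rows have their leading (leftmost) coefficient exactly at the pivot positions of $\mathcal{E}$, so $\mathcal{U}^T$ is a left triangular matrix whose non-zero entries lie only at the pivots of $\mathcal{E}^T$ and strictly below them. Moreover, $\mathcal{E}^T$ is itself a left triangular rank profile matrix of quasiseparable order $s$, since left-triangularity is preserved by transposition and the rank counts used in Definition~\ref{def:quasisep} are exchanged between rows and columns. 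Consequently, Lemma~\ref{lem:size} applies to $\mathcal{U}^T$ exactly as it did to $\mathcal{L}$.

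Next I would run Algorithm~\ref{alg:compactbruhat} on the input $\mathcal{U}^T$. Invoking the previous lemma, this yields a tuple $(\mat{D}',\mat{S}',\mat{T}',\mathcal{Q}')$ in which $\mat{D}'=\text{Diag}(\mat{D}'_1,\dots,\mat{D}'_t)$ is block diagonal, $\mat{S}'$ is block sub-diagonal with blocks $\mat{S}'_i$, each $\mat{D}'_i$ and $\mat{S}'_i$ is of shape $k_i\times s$ with $k_i\geq s$ and $\sum_i k_i=n$, $\mathcal{Q}'$ puts $\mathcal{U}^T$ in column echelon form, and
$$\mathcal{U}^T = \begin{bmatrix}\mat{D}'+\mat{S}'\mat{T}' & \mat{0}_{n\times(n-r)}\end{bmatrix}\mathcal{Q}'^T.$$

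Transposing this identity and setting $\mat{D}=\mat{D}'^T$, $\mat{S}=\mat{S}'^T$, $\mat{T}=\mat{T}'^T$ and $\mathcal{P}=\mathcal{Q}'^T$ gives
$$\mathcal{U} = \mathcal{P}^T\begin{bmatrix}\mat{D}+\mat{T}\mat{S}\\\mat{0}_{(n-r)\times n}\end{bmatrix}.$$
The structural claims follow from the transposition: each $\mat{D}_i=\mat{D}'^T_i$ and $\mat{S}_i=\mat{S}'^T_i$ have shape $s\times k_i$, $\mat{D}$ remains block diagonal, the block sub-diagonal pattern of $\mat{S}'$ becomes the block super-diagonal pattern claimed for $\mat{S}$, $\mat{T}\in\{0,1\}^{r\times r}$ since $\mat{T}'$ is, and $\mathcal{P}\mathcal{U}=\begin{bmatrix}\mat{D}+\mat{T}\mat{S}\\\mat{0}\end{bmatrix}$ is in row echelon form because $\mathcal{U}^T\mathcal{Q}'$ was in column echelon form.

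The main potential obstacle is merely one of bookkeeping: making sure that transposition cleanly exchanges row and column echelon forms, block sub- and super-diagonal patterns, and left and right multiplication by the correction matrix $\mat{T}$. All of these are direct consequences of the definitions, so no substantive new argument is required beyond invoking the $\mathcal{L}$ case.
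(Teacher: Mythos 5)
Your proposal is correct and matches the paper's own (one-line) justification: the paper obtains Lemma~\ref{lem:compactU} precisely by running Algorithm~\ref{alg:compactbruhat} on $\mathcal{U}^T$ and transposing the output. You simply supply the bookkeeping the paper leaves implicit --- checking that $\mathcal{U}^T$ satisfies the hypotheses used for $\mathcal{L}$ (via Lemma~\ref{lem:size} applied to $\mathcal{E}^T$) and that transposition exchanges column/row echelon forms and the sub-/super-diagonal block patterns --- all of which is sound.
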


According to \eqref{eq:ALEU}, the reconstruction of the initial matrix~$\mat{A}$, from the compact
Bruhat generators, writes
\begin{equation}\label{eq:Acompact}
\mat{A}=(\mat{D}_\mathcal{L}+\mat{S}_\mathcal{L}\mat{T}_\mathcal{L}) \mat{R} (\mat{D}_\mathcal{U}+\mat{T}_\mathcal{U}\mat{S}_\mathcal{U})
\end{equation}
where $\mat{R}$ is the leading $r\times r$ sub-matrix of
$\mathcal{Q}^T\mathcal{E}^T\mathcal{P}^T$. As it has full rank, it is a permutation matrix.

This factorization is a compact version of the generalized Bruhat
decomposition~\cite{MH07,DPS15:JSC}: the left factor is a column echelon form,
the right factor a row echelon form.
\section{Cost of computing with the new generators}\label{sec:cost}

\subsection{Computation of the generators}
\subsubsection{The binary tree generators}
Let $T_1(n,s)$ denote the cost of the computation of the  binary tree generator
for an $n\times n$ matrix of order of quasiseparability $s$.
It satisfies the recurrence relation $T_1(n,s) = K_\omega s^{\omega-2}\left(\frac{n}{2}\right)^2 +
2T_1(n/2,s)$, which solves in 
$$
T(n,s) = \frac{K_\omega}{2} s^{\omega-2}n^2 \text{ with }
K_\omega = \frac{2^{\omega-2}}{(2^{\omega}-2)(2^{\omega-2}-1)}\mat{C}_\omega
$$
where $C_\omega$ is the leading constant of the complexity of matrix
multiplication~\cite{DPS13}.

\subsubsection{The \bruhat}

We propose in Algorithm~\ref{alg:bruhatgenerator} an evolution of Algorithm~\ref{alg:LTElim}
to compute the factors of the \bruhat.
\begin{algorithm}[htb]
{\small
  \caption{LT-Bruhat}
\label{alg:bruhatgenerator}
  \begin{algorithmic}[1]
    \Require{$\mat{A}$: an $n\times n$ matrix}
    \Ensure{$(\mathcal{L},\mathcal{E},\mathcal{U})$: a \bruhat for the left
      triangular part of $\mat{A}$}
    \State \textbf{if} $n=1$ \textbf{then} \textbf{return} $([0],[0],[0])$
    \State {Split $\mat{A} = \begin{bmatrix} \mat{A_{1}} & \mat{A_{2}}\\\mat{A_{3}} \end{bmatrix}$ 
      where $\mat{A_{3}}$ is $\lfloor \frac{n}{2} \rfloor \times \lfloor
      \frac{n}{2}\rfloor$}
    \State Decompose $\mat{A_{1}} =
    \mat{P_1} \begin{bmatrix}\mat{L_1}\\\mat{M_1}\end{bmatrix}\begin{bmatrix}\mat{U_1}&\mat{V_1}\end{bmatrix} \mat{Q_1}$ \Comment{$\pluq(\mat{A_{1}})$}
    \State $\mat{R_1}  \leftarrow \mat{P_1} \begin{bmatrix}
      \mat{I_{r_1}}\\&\mat{0}\end{bmatrix}\mat{Q_1}$ where $r_1=\text{rank}(\mat{A_1})$.
    \State $\begin{bmatrix} \mat{B_1}\\ \mat{B_2}\end{bmatrix}
    \leftarrow \mat{P_1}^T\mat{A_{2}}$ \Comment{$\texttt{PermR}(\mat{A_2},\mat{P_1}^T)$}
    \State $
    \begin{bmatrix}
      \mat{C_1}&\mat{C_2}
    \end{bmatrix}
    \leftarrow \mat{A_{3}}\mat{Q_1}^T$ \Comment{$\texttt{PermC}(\mat{A_3},\mat{Q_1}^T)$}
    \State Here $A =
    \left[\begin{array}{cc|c}
        \mat{L_1} \backslash \mat{U_1}& \mat{V_1}& \mat{B_1}\\
        \mat{M_1}               & \mat{0}  & \mat{B_2}\\
        \hline
        \mat{C_1}               & \mat{C_2}& \\
      \end{array}\right]$. \label{step:partialPLUQ}
    \State $\mat{D}\leftarrow \mat{L_1}^{-1}\mat{B_1}$ \Comment{$\trsm(\mat{L_1},\mat{B_1})$}
    \State $\mat{E}\leftarrow \mat{C_1}\mat{U_1}^{-1}$ \Comment{$\trsm(\mat{C_1},\mat{U_1})$}
    \State $\mat{F}\leftarrow \mat{B_2}-\mat{M_1}\mat{D}$ \Comment{$\MM(\mat{B_2},\mat{M_1},\mat{D})$}
    \State $\mat{G}\leftarrow \mat{C_2}-\mat{E}\mat{V_1}$ \Comment{$\MM(\mat{C_2},\mat{E},\mat{V_1})$}
    \State Here $\mat{A}=
    \left[\begin{array}{cc|c}
        \mat{L_1} \backslash \mat{U_1}& \mat{V_1}& \mat{D}\\
        \mat{M_1}               & \mat{0}  & \mat{F}\\
        \hline
        \mat{E}               & \mat{G}& \\
      \end{array}\right]$.
    
    \State $\mat{H} \leftarrow \mat{P_1}  \begin{bmatrix} \mat{0}_{r_1 \times \frac{n}{2}} \\ \mat{F} \end{bmatrix}$
    \State $\mat{I} \leftarrow  \begin{bmatrix} \mat{0}_{r_1 \times \frac{n}{2}} & \mat{G} \end{bmatrix} \mat{Q_1}$
    \State\label{step:firstLT} $(\mathcal{L}_2,\mathcal{E}_2,\mathcal{U}_2) \leftarrow
    \texttt{LT-Bruhat}(\mat{H})$ 
    \State\label{step:secondLT} $(\mathcal{L}_3,\mathcal{E}_3,\mathcal{U}_3) \leftarrow
    \texttt{LT-Bruhat}(\mat{I})$ 
    \State $\mathcal{L} \leftarrow
\LTP \left(    \begin{bmatrix}      \mat{P}_1\\&\mat{I}_{\frac{n}{2}}    \end{bmatrix} 
    \begin{bmatrix}
      \mat{L}_1 \\
      \mat{M}_1 &\mat{0}\\
      \mat{E} & \mat{0}
    \end{bmatrix}
    \begin{bmatrix}
      \mat{Q}_1 \\&\mat{I}_{\frac{n}{2}}
    \end{bmatrix}\right)
+
\begin{bmatrix}  \mat{0}&\mathcal{L}_2\\\mathcal{L}_3\end{bmatrix}
$
    \State $\mathcal{U} \leftarrow
    \begin{bmatrix}
      \mat{P}_1
      \begin{bmatrix}\mat{U}_1&V_1\\\mat{0}&\mat{0} \end{bmatrix} \mat{Q}_1 & \LTP(\mat{P}_1
      \begin{bmatrix} \mat{D}\\\mat{0}  \end{bmatrix}) \\
      \mat{0}&\mat{0}
    \end{bmatrix}
+
\begin{bmatrix}  \mat{0}&\mathcal{U}_2\\\mathcal{U}_3\end{bmatrix}
$\label{step:calU}
    \State $\mathcal{E} \leftarrow  \begin{bmatrix}     \mathcal{E}_1 & \mathcal{E}_2\\ \mathcal{E}_3   \end{bmatrix}$
   \State \textbf{return} $(\mathcal{L},\mathcal{E},\mathcal{U})$
  \end{algorithmic}
}
\end{algorithm}

\begin{theorem}
  For any $n\times n$ matrix $\mat{A}$ with a left triangular part of
  quasiseparable order $s$, Algorithm~\ref{alg:bruhatgenerator} computes the \bruhat of the
  left triangular part of $\mat{A}$ in $\GO{s^{\omega-2}n^2}$ field operations.
\end{theorem}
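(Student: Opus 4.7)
The plan is to mirror the analysis of Theorem~\ref{th:LTRPM}, since Algorithm~\ref{alg:bruhatgenerator} performs exactly the same arithmetic operations as Algorithm~\ref{alg:LTElim} and differs only in the way it assembles the $\mathcal{L}$ and $\mathcal{U}$ output factors. I would proceed in two stages: a correctness argument by induction on $n$, followed by a cost analysis reusing the recurrence of Theorem~\ref{th:LTRPM}.

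For correctness I would argue by induction on $n$, the base case being immediate. For the inductive step, the key observation is that steps 3--11 implicitly expose a global PLUQ decomposition $\mat{A} = \widetilde{\mat{P}}\widetilde{\mat{L}}\widetilde{\mat{U}}\widetilde{\mat{Q}}$ of $\mat{A}$ that reveals its rank profile matrix: the top-level pivots come from $(\mat{P}_1,\mat{L}_1,\mat{U}_1,\mat{Q}_1)$, while the remaining pivots come from the PLUQ decompositions implicit in the recursive calls on $\mat{H}$ and $\mat{I}$. The assembly step for $\mathcal{L}$ then exactly realizes $\LTP(\widetilde{\mat{P}}\begin{bmatrix}\widetilde{\mat{L}}&\mat{0}\end{bmatrix}\widetilde{\mat{Q}})$: the first summand aggregates the columns of $\widetilde{\mat{L}}$ contributed by the top-level pivots (namely $\mat{L}_1$, $\mat{M}_1$ and $\mat{E}$, conjugated by $\mat{P}_1$ and $\mat{Q}_1$), while the second summand $\begin{bmatrix}\mat{0}&\mathcal{L}_2\\\mathcal{L}_3\end{bmatrix}$ supplies the remaining columns by the induction hypothesis applied to $\mat{H}$ and $\mat{I}$. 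The analogous identity is obtained for $\mathcal{U}$. Correctness of $\mathcal{E}$ is inherited directly from Theorem~\ref{th:LTRPM}, since the $\mathcal{E}$ assembly here is identical to that of $\mathcal{R}$ in Algorithm~\ref{alg:LTElim}.

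For the cost bound I would reuse the recurrence from the proof of Theorem~\ref{th:LTRPM}. The PLUQ decomposition of $\mat{A}_1$, the triangular solves yielding $\mat{D}$ and $\mat{E}$, and the updates producing $\mat{F}$ and $\mat{G}$ together cost $\GO{r_1^{\omega-2} n^2}$. The extra work done in Algorithm~\ref{alg:bruhatgenerator} to assemble $\mathcal{L}$ and $\mathcal{U}$ consists only of permutations, extractions of left triangular parts, and block concatenations, so it is bounded by $\GO{n^2}$ and absorbed in the main term. The argument of Theorem~\ref{th:LTRPM} already shows that the quasiseparable orders of the recursive inputs $\mat{H}$ and $\mat{I}$ are at most $s$ and that $r_1 \leq s$. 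Hence the same recurrence $T(n,s) \leq \alpha s^{\omega-2} n^2 + 2\,T(n/2,s)$ applies and solves to $T(n,s) = \GO{s^{\omega-2} n^2}$.

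The main obstacle is the careful bookkeeping required in the correctness argument to check that combining the top-level factors with the recursive Bruhat factors produces no cross-contribution violating the shape constraints~\eqref{eq:storage:L}--\eqref{eq:storage:U}. The zero paddings $\begin{bmatrix}\mat{0}_{r_1 \times \frac{n}{2}}\\ \mat{F}\end{bmatrix}$ and $\begin{bmatrix}\mat{0}_{r_1 \times \frac{n}{2}} & \mat{G}\end{bmatrix}$ used to launch the recursive calls are precisely what guarantees that the pivots already discovered in $\mat{A}_1$ are not rediscovered in the recursion, so that the supports of the various contributions to $\mathcal{L}$ (and to $\mathcal{U}$) remain disjoint and the $\LTP$ projection commutes with the recursive assembly.
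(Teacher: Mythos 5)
Your proposal follows essentially the same route as the paper's proof: induction on $n$, observing that the top-level partial elimination together with the recursive calls on $\mat{H}$ and $\mat{I}$ assembles into a global rank-profile-revealing PLUQ decomposition whose $\LTP$-projected factors are exactly what the algorithm returns, with the cost recurrence inherited from Theorem~\ref{th:LTRPM}. The only caveat is that what you flag as ``the main obstacle'' is precisely where the paper spends its effort --- the explicit commutations $\mat{P}_1^T\mat{P}_2\mat{L}_2=\begin{bmatrix}\mat{0}\\ \mat{\bar P}_2\mat{\bar L}_2\end{bmatrix}$ and $\mat{U}_3\mat{Q}_3\mat{Q}_1^T=\begin{bmatrix}\mat{0}&\mat{\bar U}_3\mat{\bar Q}_3\end{bmatrix}$ followed by the block cyclic shifts of \cite[Algorithm~1]{DPS13} --- so your sketch is correct in outline but leaves that verification to be carried out.
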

\begin{proof}
The correctness of $\mathcal{E}$ is proven in Theorem~\ref{th:LTRPM}.
We will prove by induction the correctness of $\mathcal{U}$, noting that the correctness of
$\mathcal{L}$ works similarly.

Let $\mat{H}=\mat{P}_2\mat{L}_2\mat{U}_2\mat{Q}_2$ and
$\mat{I}=\mat{P}_3\mat{L}_3\mat{U}_3\mat{Q}_3$ be PLUQ decompositions of
$\mat{H}$ and $\mat{I}$ revealing their rank profile matrices. 
Assume that Algorithm LT-Bruhat is correct in the two recursive calls~\ref{step:firstLT}
and~\ref{step:secondLT}, that is 
$$
\begin{array}{ll}
\mathcal{U}_2 = \LTP(\mat{P}_2\begin{bmatrix} \mat{U}_2\\\mat{0}\end{bmatrix} \mat{Q}_2), & 
\mathcal{U}_3 = \LTP(\mat{P}_3\begin{bmatrix} \mat{U}_3\\\mat{0}\end{bmatrix} \mat{Q}_3),\\
\mathcal{L}_2 = \LTP(\mat{P}_2\begin{bmatrix} \mat{L}_2&\mat{0}\end{bmatrix} \mat{Q}_2), &
\mathcal{L}_3 = \LTP(\mat{P}_3\begin{bmatrix} \mat{L}_3&\mat{0}\end{bmatrix} \mat{Q}_3).\\
\end{array}
$$

At step~\ref{step:partialPLUQ}, we have
{\small
\begin{eqnarray*}
\begin{bmatrix}
  \mat{A}_1 & \mat{A}_2\\
  \mat{A}_3 &*
\end{bmatrix}
 &=&
\begin{bmatrix}\mat{P}_1\\&\mat{I}_{\frac{n}{2}}\end{bmatrix}
\left[\begin{array}{cc|c}
  \mat{L}_1&&\\ 
  \mat{M}_1 &\mat{I}_{\frac{n}{2}-r_1}\\
\hline
  \mat{E} & \mat{0}&\mat{I}_{\frac{n}{2}}
\end{array}\right] \times \\
&&
\left[\begin{array}{cc|c}
  \mat{U}_1 & \mat{V}_1 &  \mat{D}\\
&\mat{0} &  \mat{F}\\
\hline
&\mat{G}&
\end{array}
\right]
\begin{bmatrix} \mat{Q}_1\\ & \mat{I}_{\frac{n}{2}} \end{bmatrix}
\end{eqnarray*}
}
As the first $r_1$ rows of $\mat{P}_1^T\mat{H}$ are zeros, there exists $\mat{\bar P}_2$ a
permutation matrix and $\mat{\bar L}_2$, a lower triangular matrix, such that
$\mat{P}_1^T\mat{P}_2\mat{L}_2 = 
\begin{bmatrix} \mat{0}_{r_1\times \frac{n}{2}} \\ \mat{\bar P}_2 \mat{\bar L}_2 \end{bmatrix}$. 
Similarly, there exsist $\mat{\bar Q}_3$, a
permutation matrix and $\mat{\bar U}_3$, an upper triangular matrix, such that $\mat{U}_3\mat{Q}_3\mat{Q}_1^T =
\begin{bmatrix}
  \mat{0}_{\frac{n}{2}\times r_1} &
  \mat{\bar U}_3 \mat{\bar Q}_3
\end{bmatrix}$.
Hence
{\small
\begin{eqnarray*}
\begin{bmatrix}
  \mat{A}_1 & \mat{A}_2\\
  \mat{A}_3 &*
\end{bmatrix}
 &=&
\begin{bmatrix}\mat{P}_1
\\&\mat{P}_3\end{bmatrix}
\left[\begin{array}{cc|c}
  \mat{L}_1&&\\ 
  \mat{M}_{1} & \mat{\bar P}_2\mat{\bar L}_2\\
\hline
  \mat{P}_3^T\mat{E} & \mat{0}&\mat{L}_3
\end{array}\right] 
\times\\
&&\left[\begin{array}{cc|c}
  \mat{U}_1 & \mat{V}_1 &  \mat{D}\mat{Q}_2^T\\
&\mat{0} &  \mat{U_2}\\
&\mat{\bar U}_3\mat{\bar Q}_3
\end{array}
\right]
\begin{bmatrix} \mat{Q}_1\\ & \mat{Q}_2 \end{bmatrix}\\
\end{eqnarray*}
}
Setting $\mat{N}_1 = \mat{\bar P}_2^T\mat{M}_1$ and $\mat{W}_1 = \mat{V}_1
\mat{\bar Q}_3^T$, we have
{\small
\begin{eqnarray*}
\begin{bmatrix}
  \mat{A}_1 & \mat{A}_2\\
  \mat{A}_3 &*
\end{bmatrix}
&=&
\begin{bmatrix}\mat{P}_1
  \begin{bmatrix}
    \mat{I}_{r_1}\\&\mat{\bar P}_2
  \end{bmatrix}
\\&\mat{P}_3\end{bmatrix}
\left[\begin{array}{ccc}
  \mat{L}_1\\ 
  \mat{N}_{1} & \mat{\bar L}_2\\
\hline
  \mat{E} &\mat{0}& \mat{\mat{L_3}}
\end{array}\right] 
\times\\
&&\left[\begin{array}{cc|c}
  \mat{U}_1 & \mat{W}_1 &  \mat{D}\mat{Q}_2^T\\
&\mat{0} &  \mat{U_2}\\
&\mat{\bar U}_3
\end{array}
\right]
\begin{bmatrix}
  \begin{bmatrix}
    \mat{I}_{r_1}\\&\mat{\bar Q}_3
  \end{bmatrix}
\mat{Q}_1\\ & \mat{Q}_2 \end{bmatrix}.
\end{eqnarray*}
}
A PLUQ of $
\begin{smatrix}
  \mat{A}_1 & \mat{A}_2\\ \mat{A}_3
\end{smatrix}
$ revealing its rank profile matrix is then obtained from this decomposition by
a row block cylic-shift on the second factor and a column block cyclic shift on
the third factor as in \cite[Algorithm~1]{DPS13}.

Finally,
{\small
\begin{equation*}
\begin{split}
\mat{P} \begin{bmatrix}  \mat{U}\\0\end{bmatrix}\mat{Q}
=
\begin{bmatrix}  \mat{P}_1\\&\mat{I}_{\frac{n}{2}}\end{bmatrix}
\begin{bmatrix}
  \mat{U}_1 & \mat{V}_1 & \mat{D} \\
  & \mat{0} & \mat{\bar P}_2 \mat{U}_2\mat{Q}_2 \\
  &\mat{P}_3 \mat{\bar U}_3  \mat{\bar Q}_3\\
\mat{0}& \mat{0}&\mat{0}
\end{bmatrix}
\begin{bmatrix}
  \mat{Q}_1\\&\mat{I}_{\frac{n}{2}}
\end{bmatrix}\\
=
\begin{bmatrix}
\mat{P}_1  \begin{bmatrix}
      \mat{U}_1 & \mat{V}_1\\
      \mat{0}&\mat{0}
  \end{bmatrix}
\mat{Q}_1 & \mat{P}_1
\begin{bmatrix}
  \mat{D}\\\mat{0}
\end{bmatrix}
 \\
  \mat{0}& \mat{0} \
\end{bmatrix}
+\begin{bmatrix}
& \mat{P}_2
\begin{bmatrix}
  \mat{U}_2\\ \mat{0}
\end{bmatrix}
 \mat{Q}_2\\
\mat{P}_3
\begin{bmatrix}
  \mat{U}_3\\\mat{0}
\end{bmatrix}
\mat{Q}_3
\end{bmatrix}.
\end{split}
\end{equation*}
}
Hence
{\small
\begin{equation*}
\LTP(\mat{P}\mat{U}\mat{Q}) =
\begin{bmatrix}
 \mat{P}_1
 \begin{bmatrix}
   \mat{U}_1 & \mat{V}_1 \\\mat{0}&\mat{0}
 \end{bmatrix}\mat{Q}_1
 & \LTP(\mat{P}_1
 \begin{bmatrix} \mat{D}\\\mat{0} \end{bmatrix}) \\
   \mat{0} & \mat{0} \\
\end{bmatrix} 
+ \begin{bmatrix}
&  \mathcal{U}_2\\
\mathcal{U}_3
\end{bmatrix}.
\end{equation*}
}


  
The complexity analysis is exactly that of Theorem~\ref{th:LTRPM}.
\end{proof}

The computation of a compact \bruhat is obtained by combining
Algorithm~\ref{alg:bruhatgenerator} with Algorithm~\ref{alg:compactbruhat}.
\subsection{Applying a vector}
For the three generators proposed earlier, the application of a vector to the
corresponding left triangular matrix takes the same amount of field
operations as the number of coefficients used for its representation. This yields a
cost of $\GO{n(r_L \log\frac{n}{r_L} + r_U \log \frac{n}{r_U})}$ field
operations for multiplying a vector to an $(r_L,r_U)$-quasiseparable matrix
using the binary tree PLUQ generator and $\GO{n(r_L+r_U)}$ using either one of
the \bruhat or its compact variant.

\subsection{Multiplying two left-triangular matrices}\label{sec:mulLT}

\subsubsection{The binary tree PLUQ generator}

Let $T_\text{RL}(n,s)$ denote the cost of multiplying a dense $s\times n$ matrix
by a left triangular quasiseparable matrix of order $s$. The natural divide and
conquer algorithm yields the recurrence formula:
\begin{equation*}
T_{\text{RL}} (n,s) =  2T_{\text{RL}}(n/2,s) + \GO{ns^{\omega-1}} = \GO{ns^{\omega-1}\log{\frac{n}{s}}}.
\end{equation*}
Let $T_\text{PL}(n,s)$ denote the cost of multiplying a PLUQ decomposition of
dimension n and rank $s\leq n/2$ with a  left triangular quasiseparable matrix
of order $s$. The product can be done in 
\begin{eqnarray*}
T_{\text{PL}} (n,s) &=& T_{\text{RL}} (n,s)+ \GO{n^2s^{\omega-2}}= \GO{n^2s^{\omega-2}}.
\end{eqnarray*}
Lastly, let $T_\text{LL}(n,s)$ denote the cost of multiplying two left-triangular matrices of quasiseparability
order $s$. Again the natural recursive algorithm yields:
\begin{eqnarray*}
  T_{\text{LL}} (n,s) &=&  2T_{\text{LL}} (n/2,s) + 2T_{\text{PL}} (n/2,s) + \GO{n^2s^{\omega-2}}\\
  &=&\GO{n^2s^{\omega-2}}
\end{eqnarray*}

\subsubsection{The \bruhat}

Using the decomposition~\eqref{eq:Acompact}, the product of two left triangular
matrices writes
$  \mat{A}\times \mat{B} =\mat{C}_\mat{A} \mat{R}_\mat{A} \mat{E}_\mat{A} \times \mat{C}_\mat{B} \mat{R}_\mat{B} \mat{E}_\mat{B} 
 $ where $\mat{C}_\mat{X} = \mat{D}_{\mathcal{L}_\mat{X}}+\mat{S}_{\mathcal{L}_\mat{X}}\mat{T}_{\mathcal{L}_\mat{X}}$
and
$\mat{E}_\mat{X} =
\mat{D}_{\mathcal{U}_\mat{X}}+\mat{T}_{\mathcal{U}_\mat{X}}\mat{S}_{\mathcal{U}_\mat{X}}$  for $\mat{X}\in\{\mat{A},\mat{B}\}$.
We will compute it using the following parenthesizing:
\begin{equation}
  \label{eq:parenthesizing}
    \mat{A}\times \mat{B}
    =\mat{C}_\mat{A} (\mat{R}_\mat{A} (\mat{E}_\mat{A} \times \mat{C}_\mat{B}) \mat{R}_\mat{B}) \mat{E}_\mat{B}. 
\end{equation}

The product $\mat{E}_\mat{A} \times \mat{C}_\mat{B}=(\mat{D}_{\mathcal{U}_\mat{A}}+\mat{T}_{\mathcal{U}_\mat{A}}\mat{S}_{\mathcal{U}_\mat{A}})( \mat{D}_{\mathcal{L}_\mat{B}}+\mat{S}_{\mathcal{L}_\mat{B}}\mat{T}_{\mathcal{L}_\mat{B}})$
only consists in multiplying together block diagonal or sub-diagonal matrices
$n\times r_B$ or $r_A\times n$. We will describe the product of two block
diagonal matrices (flat times tall); the other cases with sub-diagonal matrices work similarly.

Each term to be multiplied is decomposed in a grid of $s\times s$ tiles (except
at the last row and column positions). In this
grid, the non-zero blocks are non longer in a block-diagonal layout: in a flat matrix, the leading block of a
block row may lie at the same block column position as the trailing block of its
preceding block row,
as shown in Figure~\ref{fig:blocks}. 
\begin{figure}[h]
  \centering
  \includegraphics[width=.52\columnwidth]{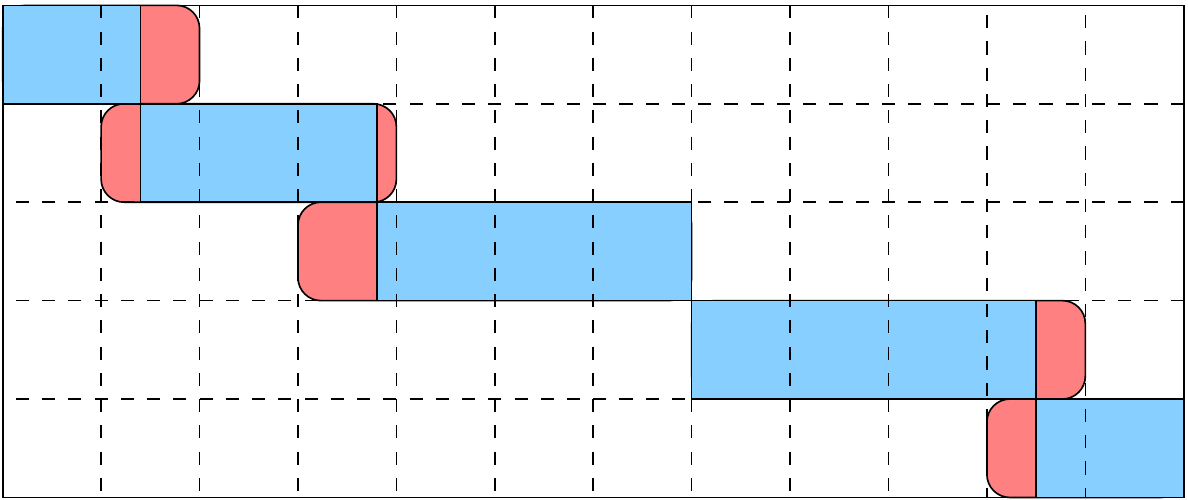}
  \caption{Aligning a block diagonal matrix (blue) on an $s\times s$ grid. Each
    block row of the aligned structure (red) may overlap with the previous and next
    block rows on at most one $s\times s$ tile on each side.
}
  \label{fig:blocks}
\end{figure}
However, since $k_i\geq s$ for all $i$, no more than two
consecutive block rows of a flat matrix lie in the same block column.
Consequently these terms can be decomposed as a sum of two block diagonal
matrices aligned on an $s\times s$ grid. Multiplying two such matrices costs
$\GO{s^{\omega-1}n}$ which is consequently also the cost of computing  the product $\mat{E}_\mat{A}\mat{C}_\mat{B}$.
After left and right multiplication by the permutations $\mat{R}_\mat{A}$ and
$\mat{R}_\mat{B}$, this $r_\mat{A}\times r_\mat{B}$
dense matrix is multiplied to the left by $\mat{C}_\mat{A}$. This costs
$\GO{nr_{B}s^{\omega-2}}$. Lastly, the right multiplication by
$\mat{E}_{B}$ of the resulting $n\times r_\mat{A}$ matrix costs
$\GO{n^2s^{\omega-2}}$ which dominates the overall cost.

\subsection{Multiplying two quasiseparable matrices}
Decomposing each multiplicand into its upper, lower and diagonal terms, a
product of two quasiseparable matrices writes
$
\mat{A}\times \mat{B} = (\mat{L}_\mat{A} +\mat{D}_\mat{A} +\mat{U}_\mat{A}) (\mat{L}_\mat{B} +\mat{D}_\mat{B} +\mat{U}_\mat{B}).
$
Beside the scaling by diagonal matrices, all other operations involve a
product between any combination of lower an upper triangular matrices, which in turn
translates into products of left triangular matrices and $\mat{J}_n$ as shows in Table~\ref{tab:loweupper}.
\begin{table}[h]
{\small
  \centering
\begin{tabular}[h]{c|cc}
\toprule
$\times$  & Lower & Upper\\
\midrule
Lower & $\mat{J}_n \times\text{Left} \times\mat{J}_n\times \text{Left}$ & $\mat{J}_n\times \text{Left}\times \text{Left}\times\mat{J}_n$\\
Upper & $\text{Left} \times \mat{J}_n\times \mat{J}_n \times\text{Left}$ & $\text{Left}\times\mat{J}_n\times \text{Left}\times\mat{J}_n$\\
\bottomrule
\end{tabular}
  \caption{Reducing products of lower and upper to products of left triangular matrices.}\label{tab:loweupper}
}
\end{table}
The complexity of section~\ref{sec:mulLT} directly applies for the computation of $\text{Upper}\times \text{Lower}$ and
$\text{Lower}\times\text{Upper}$ products. For the other products, a
$\mat{J}_n$ factor has to be added 
between the $\mat{E}_\mat{A}$ and $\mat{C}_\mat{B}$ factors in the innermost
product of~\eqref{eq:parenthesizing}. As reverting the row order of
$\mat{C}_\mat{B}$ does not impact the cost of computing this product,
the same complexity applies here too.

\begin{theorem}
  Mutliplying two quasiseparable matrices of order respectively
  $(l_\mat{A},u_\mat{A})$ and  $(l_\mat{B},u_\mat{B})$ costs $\GO{n^2s^{\omega-2}}$ field operations
  where $s=\max(l_\mat{A},u_\mat{A},l_\mat{B},u_\mat{B})$,  using either one of the binary tree or the compact \bruhat.
\end{theorem}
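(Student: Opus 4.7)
The plan is to reduce the product of two quasiseparable matrices to a constant number of products of left triangular matrices, for which the complexity $\GO{n^2 s^{\omega-2}}$ has already been established in Section~\ref{sec:mulLT}. I would start by writing $\mat{A} = \mat{L}_\mat{A} + \mat{D}_\mat{A} + \mat{U}_\mat{A}$ and similarly for $\mat{B}$, then expand $\mat{A}\mat{B}$ as nine terms. The three terms involving a diagonal multiplicand are trivially handled: scaling a quasiseparable matrix by a diagonal preserves quasiseparability and costs only $\GO{ns}$, since $\mat{D}_\mat{A}$ and $\mat{D}_\mat{B}$ are applied blockwise to the (structured) upper and lower parts of the other operand.

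Next, I would treat the four "cross" products between upper and lower triangular factors by invoking the reductions summarised in Table~\ref{tab:loweupper}: each of the four combinations (Lower$\times$Lower, Lower$\times$Upper, Upper$\times$Lower, Upper$\times$Upper) can be rewritten as a product of two left triangular matrices up to left-, right-, and middle-insertion of the involution~$\mat{J}_n$. For the Lower$\times$Upper and Upper$\times$Lower cases, the only $\mat{J}_n$ factors appear on the outside of the product of left triangular matrices, and applying them is a row/column reversal, which does not affect the asymptotic cost. Hence the complexity analysis of Section~\ref{sec:mulLT} immediately gives $\GO{n^2 s^{\omega-2}}$ for these two terms.

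The delicate cases are Lower$\times$Lower and Upper$\times$Upper, where an additional $\mat{J}_n$ sits \emph{between} the two left triangular factors. Here I would appeal to the parenthesizing in equation~\eqref{eq:parenthesizing}: inserting $\mat{J}_n$ between $\mat{E}_\mat{A}$ and $\mat{C}_\mat{B}$ amounts to reversing the row order of $\mat{C}_\mat{B}$. Since $\mat{C}_\mat{B} = \mat{D}_{\mathcal{L}_\mat{B}} + \mat{S}_{\mathcal{L}_\mat{B}}\mat{T}_{\mathcal{L}_\mat{B}}$ and the complexity argument of Section~\ref{sec:mulLT} only relied on the block-diagonal/sub-diagonal structure with column-width $s$ blocks (which is preserved by row reversal up to relabeling of the blocks), the product $\mat{E}_\mat{A}\,\mat{J}_n\,\mat{C}_\mat{B}$ can still be computed in $\GO{s^{\omega-1}n}$. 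The remaining multiplications by $\mat{C}_\mat{A}$ on the left and $\mat{E}_\mat{B}$ on the right are unchanged and contribute $\GO{n^2 s^{\omega-2}}$.

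The main obstacle I expect is precisely this last point: one must check carefully that inserting $\mat{J}_n$ between the compact Bruhat factors does not destroy the alignment of non-zero $s\times s$ tiles that underlies the $\GO{s^{\omega-1}n}$ estimate. Once this is verified, summing all nine contributions with $s = \max(l_\mat{A}, u_\mat{A}, l_\mat{B}, u_\mat{B})$ yields the claimed $\GO{n^2 s^{\omega-2}}$ bound, both for the binary tree PLUQ generator (via the analysis in Section~\ref{sec:mulLT}.1) and for the compact \bruhat.
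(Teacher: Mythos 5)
Your proposal follows essentially the same route as the paper: decompose each factor into lower, diagonal, and upper parts, reduce the resulting products to products of left triangular matrices via the $\mat{J}_n$ reductions of Table~\ref{tab:loweupper}, and observe that the extra $\mat{J}_n$ arising in the Lower$\times$Lower and Upper$\times$Upper cases merely reverses the row order of $\mat{C}_\mat{B}$ in the innermost product of~\eqref{eq:parenthesizing}, leaving the cost unchanged. The argument and the identification of the delicate case coincide with the paper's proof, so no further comparison is needed.
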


\section{Perspectives}

The algorithms proposed for multiplying two quasiseparable matrices output
a dense $n\times n$ matrix  in time $\GO{n^2s^{\omega-2}}$ for $s=\max(l_\mat{A},u_\mat{A},l_\mat{B},u_\mat{B})$.
However, the product is also a quasiseparable matrix, of order
$({l_\mat{A}}+{l_\mat{B}},{u_\mat{A}}+{u_\mat{B}})$~\cite[Theorem~4.1]{EiGo99},
which can be represented by a \bruhat with only
$\GO{n({l_\mat{A}}+{l_\mat{B}}+{u_\mat{A}}+{u_\mat{B}})}$ coefficients. 
A first natural question is thus to find an algorithm computing this
representation from the generators of $\mat{A}$
and $\mat{B}$ in time $\GO{ns^{\omega-1}}$.

Second, a probabilistic algorithm~\cite[\S~7]{DPS15:JSC} reduces the complexity
of computing the rank profile matrix to $\SO{n^2+r^\omega}$. It is not clear whether it can be applied to
compute a compact Bruhat generator in time $\SO{n^2 + \max(l_\mat{A},u_\mat{A})^\omega}$.

\section*{Note (added Sept. 16, 2016.)}
Equation~\eqref{eq:parenthesizing} for the multiplication of two Bruhat
generators is missing the $\LTP{}$ operators, and is therefore incorrect. The target
complexities can still be obtained by slight modification of the algorithm:
computing the inner-most product $\mat{E}_\mat{A}\times \mat{C}_\mat{B}$ as an
unevaluated sum of blocks products. This will be detailed in a follow-up paper.

\section*{Acknowledgment}

We thank Paola Boito for introducing us to the field of 
quasiseparable matrices and  two anonymous referees for pointing us to the HSS and the Givens
weight representations.
We acknowledge the financial support from the \href{//http://hpac.gforge.inria.fr/}{HPAC project}
(ANR~11~BS02~013) and from the \href{http://opendreamkit.org/}{OpenDreamKit} \href{https://ec.europa.eu/programmes/horizon2020/}{Horizon 2020} \href{https://ec.europa.eu/programmes/horizon2020/en/h2020-section/european-research-infrastructures-including-e-infrastructures}{European Research Infrastructures} project (\href{http://cordis.europa.eu/project/rcn/198334_en.html}{\#676541}).

\bibliography{quasisep}
\bibliographystyle{plain}

\end{document}